
\documentclass[review,onefignum,onetabnum]{siamart190516}

\usepackage{graphics}
\usepackage{graphicx}
\graphicspath{{./fig/}}
\usepackage{color}
\usepackage{graphicx,epstopdf} 
\usepackage[caption=false]{subfig}
\usepackage{amssymb}
\usepackage{amsmath}
\usepackage{mathtools}
\usepackage{amsfonts}
\usepackage{ntheorem}

\usepackage{amsmath}
\usepackage{enumitem} 
\usepackage{mathrsfs}
\usepackage{latexsym, bm}
\usepackage{hyperref}
\hypersetup{
  colorlinks   = true, 
  urlcolor     = orange, 
  linkcolor    = red, 
  citecolor   = green 
}

\usepackage{pdfpages}
\setcounter{page}{1}
\newtheorem{remark}{Remark}

\newcommand{\interior}{\operatorname{int}}

\usepackage{url}
 \DeclareMathOperator{\Span}{span}
\DeclareMathOperator{\diag}{diag}
\usepackage{lineno}



\newcommand{\until}[1]{\{1,\dots, #1\}}

\newcommand{\setdef}[2]{\{#1 \; | \; #2\}}

\newcommand{\map}[3]{#1: #2 \rightarrow #3}


\newcommand\oprocendsymbol{\hbox{$\square$}}
\newcommand\oprocend{\relax\ifmmode\else\unskip\hfill\fi\oprocendsymbol}


\DeclareSymbolFont{bbold}{U}{bbold}{m}{n}
\DeclareSymbolFontAlphabet{\mathbbold}{bbold}




\newcommand{\Prob}{\mathbb{P}}
\newcommand{\E}{\mathbb{E}}
\newcommand{\Ave}{\textup{ave}}
\newcommand{\Var}{\textup{Var}}
\newcommand{\argmin}{\textup{argmin}}
\newcommand{\col}{\textup{col}}
\allowdisplaybreaks[2]

\usepackage[prependcaption,colorinlistoftodos]{todonotes}

\usepackage{hyperref}

\usepackage{algorithm}
\usepackage[noend]{algorithmic}
\algsetup{indent=2em}

\captionsetup[algorithm]{labelformat=empty}
\renewcommand{\algorithmicrequire}{\textbf{Input:}}
\renewcommand{\algorithmicensure}{\textbf{Output:}}


%

\headers{How social influence affects the wisdom of crowds}{Y. Tian, L. Wang, and F. Bullo}

\title{How social influence affects the wisdom of crowds in influence
  networks\thanks{Submitted to the editors DATE.  The authors thank
    Professor Fabio Fagnani, Politecnico di Torino, Italy, for insightful
    early conversations that helped define this work.  \funding{The work of
      Ye Tian and Long Wang was supported by the National Natural Science
      Foundation of China under Grant 62036002. The work of Francesco Bullo was supported in part by
      U. S. Army Research Office under grant W911NF-15-1-0577 and grant W911NF-22-1-0233.{\it(Corresponding author: Long Wang)}}}}

\author{Ye Tian\thanks{Center for Systems and Control, College of Engineering, Peking University, Beijing 100871, China (\email{tinybeta7.1@gmail.com}, \email{longwang@pku.edu.cn}).}
\and Long Wang\footnotemark[2]
\and Francesco Bullo\thanks{Mechanical Engineering Department and the Center of Control, Dynamical Systems, and Computation, University of California at Santa Barbara, Santa Barbara, CA 93106 USA (\email{bullo@engineering.ucsb.edu}).}}

\usepackage{amsopn}
\setlength{\textfloatsep}{12pt}

\begin{document}

\maketitle
\begin{abstract}
  A long-standing debate is whether social influence improves the
  collective wisdom of a crowd or undermines it. This paper addresses the
  question based on a na\"{\i}ve learning setting in influence systems
  theory: in our models individuals evolve their estimates of an unknown
  truth according to the weighted-average opinion dynamics. A formal
  mathematization is provided with rigorous theoretical analysis. We obtain
  various conditions for improving, optimizing and undermining the crowd
  accuracy, respectively. We prove that if the wisdom of a finite-size group is improved, then the
  collective estimate converges to the truth as the group size increases. We
  show that whether social influence improves or undermines the wisdom is
  determined by the social power allocations in the influence system: if the
  influence system allocates relatively larger social power to relatively
  more accurate individuals, it improves the wisdom; on the contrary, if
  the influence system assigns less social power to more accurate
  individuals, it undermines the wisdom. At a population level,
  individuals' susceptibilities to interpersonal influence and network
  centralities are both crucial. To improve the wisdom, more accurate
  individuals should be less susceptible and have larger network
  centralities. Particularly, in democratic influence networks, if
  relatively more accurate individuals are relatively less susceptible, the
  wisdom is improved; if more accurate individuals are more susceptible,
  the wisdom is undermined, which is consistent with the reported empirical
  evidence. Our investigation provides a theoretical framework for
  understanding the role social influence plays in the emergence of
  collective wisdom.
\end{abstract}

\begin{keywords}
opinion dynamics, wisdom of crowds, influence networks, social power
\end{keywords}
\begin{AMS}
37A50, 90B15, 93E35
\end{AMS}
\section{Introduction}\label{S1}
\paragraph*{Problem description and motivation}
The wisdom of crowds effect refers to the phenomenon of improvement in estimate accuracy by pooling many independent estimates. Numerous experiments and simulations have been contributed to understanding and explaining this collective intelligence for decades. A very recent interest and debate with respect to social networks is the question of whether and, if so, how social influence improves or undermines the wisdom of crowds. As much research with empirical evidence, even in support of contrary conclusions, was reported, there is still a lack of rigorous mathematical formulation and analysis.

On the other hand, influence system theory studies the dissemination and aggregation of opinions in influence networks based on mathematical models of opinion dynamics. A relevant model is the so-called na\"{\i}ve learning model. In this model, individuals update their estimates for an unknown truth according to the French-DeGroot (FD) opinion dynamics with their initial estimates disturbed by independent zero-mean noise. Thereby, the group is initially wise in the sense that the initial collective estimate converges to the truth as the group size increases. However, existing results focus on how to preserve the wisdom as the group size tends to infinity~\cite{BG-MOJ:10, FB-FF-BF:17l}, instead of improving or undermining the wisdom.

This paper addresses the question of how social influence improves or undermines the wisdom of finite-size groups in a general framework of weighted-average opinion dynamics. Before the social influence process, individuals independently generate their initial estimates for an unknown truth according to their personal knowledge or expertise, which are described by a family of independent random variables with uniformly unbiased expectations and possibly different variances. Then, social influence takes effect and individuals evolve their estimates under the influence of each other. We propose the formal formulation of wisdom of crowds as well as definitions of improving, optimizing and undermining of the wisdom for a class of weighted-average opinion dynamics. Necessary and/or sufficient conditions for improving, optimizing and undermining of wisdom are derived. Then our theoretical results are applied to the FD opinion dynamics to examine empirical findings in the literature. We aim to provide a theoretical framework to understand what social processes do indeed promote or diminish the collective intelligence and how to organize a wiser group.

\paragraph*{Literature review}
As well-known as the idiom ``two heads are better than one", wisdom of
crowds has been a source of fascination to academia for
centuries~\cite{JSU:04}. A comparable concept in the area of animal
behavior is called the many wrongs principle~\cite{AMS:04}. In 1907,
Galton~\cite{FG:1907} reported an experiment in which averaging hundreds of
guesses for the weight of an ox yielded a collective estimate essentially
close to the true value. In the middle of the last century, a structured
communication technique, known as the Delphi method, was developed by the RAND Corporation to systematically exploit experts'
opinions~\cite{ND-OH:63}. Lots of experiments have been conducted on the
Delphi method; evidence shows that the Delphi method produces better
decisions than uncontrolled group discussion across many
domains~\cite{RJB:74}. The most significant finding of the Delphi experiments
is that participants without strong convictions tend to change their
estimates, while those who feel they have a good argument for a deviant
estimate tend to retain and defend their original
estimates~\cite{OH:67}. In~\cite{NCD:68}, participants who did not change
their estimates are called holdouts, in contrast to swingers who changed their estimates. The holdouts were reported more
accurate than swingers and even than the entire group. The same phenomenon
was also reported in~\cite{GM-GGdP:15}and~\cite{JB-DB-DC:17} recently, while~\cite{GM-GGdP:15} named it the
\emph{Parent\'{e} and Anderson-Parent\'{e} (PAP) hypothesis}.

In a related area, the investigation of social networks focuses on the
formation and evolution of individuals' opinions and social power in the
presence of interpersonal influence. Classic opinion dynamics models
include the FD model~\cite{MHDG:74}, the Friedkin-Johnsen
model~\cite{NEF-ECJ:99, YT-LW:18}, the Hegselmann-Krause
model~\cite{VDB-JMH-JNT:09,VDB-JMH-JNT:10} and the Altafini
model~\cite{CA:13,DM-ZM-YH:19}, to name but a
few~\cite{DB-HT-TB:16,AVP-RT:17}. In this literature, the FD
model serves as a basis of others and is widely established. As the
research progresses, social power and network centrality turn out to be
crucial in determining the outcome of opinion
formation~\cite{PJ-AM-NEF-FB:13d,YT-PJ-AM-LW-NEF-FB:19c}, where social
power indicates the dominance of an individual's initial opinion on group's
final opinions, and network centrality measures the relative importance of
an individual in the network.

Naturally, there arises the question of whether social influence improves
or undermines the wisdom, and how. Lorenz et al.~\cite{JL-HR-FS-DH:11}
concluded, with empirical evidence, that even mild social influence
undermines the wisdom by diminishing the diversity of estimates. In
contrast, Becker et al.~\cite{JB-DB-DC:17} showed that social influence
improves the accuracy of collective estimate even as individuals' estimates
become more similar. In~\cite{AEM-JBS-RPL:14} and~\cite{GM-GGdP:15}, the
authors showed, respectively, that selecting individuals according to their
exerted average accuracy and their resistance to social influence makes
large improvements in group accuracy. In~\cite{AA-ANC-AA-PMK-MM-AP:20}, the
authors showed that social influence produces more accurate collective
estimates in the presence of plasticity and feedback.

\paragraph*{Contributions} 
This paper investigates the effect of social influence on wisdom of crowds in influence networks. Different from the na\"{\i}ve learning model, we address the question of how social influence improves or undermines the wisdom in finite-size groups. First, we propose mathematical formulation for wisdom of crowds and formal definitions for improving, optimizing and undermining of the wisdom in influence networks modelled by weighted-average opinion dynamics. We show that in our definitions, if the influence system improves the wisdom of a finite-size group, the collective estimate converges to the truth in probability as the group size increases.

Second, we study how to improve and to optimize the wisdom. We show that at an influence system level, whether social influence improves or undermines the wisdom is decided by the social power allocations of the influence system. Several notions for consistency of an influence system's social power allocations and individuals' accuracy are defined. Necessary and/or sufficient conditions for improving and optimizing the wisdom are provided. We prove that social influence optimizes the wisdom if and only if the influence system's social power allocations are exactly proportional to individuals' accuracy. The wisdom is improved only if individuals' variances are not uniform; and if so, social influence improves the wisdom if the influence system allocates relatively more, but not too much, social power to more accurate individuals. Moreover, we prove that the select crowd strategy works under proper conditions. We also define a hierarchy of individuals' variances with which all social power allocations in certain orderings improve the wisdom; an algorithm is further designed to find all these orderings for given individuals' variances and its complexity is analyzed.

Third, we study how to undermine the wisdom. We show that if the influence system allocates more social power to less accurate individuals, the wisdom is undermined. Moreover, we prove that if social influence improves the wisdom with all social power allocations in a certain ordering, then social influence undermines the wisdom with all social power allocations in the reverse ordering. Therefore, the aforementioned algorithm also finds all the orderings that all social power allocations in these orderings undermine the wisdom. 

Finally, we apply our theoretical results to the FD opinion dynamics. We show that at a population level, individuals' susceptibilities to interpersonal influence and network centralities both play important roles. In general influence networks, the \emph{PAP hypothesis} is neither sufficient nor necessary to improve the wisdom. Roughly speaking, social influence improves the wisdom if more accurate individuals have larger network centralities and are less susceptible to social influence. However, if the influence network is democratic, the \emph{PAP hypothesis} is, to some degree, sufficient and/or necessary to improve or to optimize the wisdom. We also show that the wisdom can be improved or optimized even in autocratic influence networks. 

Our theoretical analysis reveals some findings of sociological significance and contributes to understanding the role social influence plays in the emergence of collective intelligence. First, social influence can both improve and undermine the wisdom of crowds, which mainly depends on individuals' accuracy and the social power allocations of the influence system. If the influence system assigns relatively larger social power to relatively more accurate individuals, it improves the wisdom; if the influence system assigns less social power to more accurate individuals, it undermines the wisdom. Moreover, there is a symmetric conclusion that if the influence system with a certain ordering of social power allocations improves the wisdom, then it undermines the wisdom with the inversely ordered social power allocations. Lastly, in democratic influence network, if relatively more accurate individuals are relatively less susceptible to social influence, the wisdom is improved, which supports the \emph{PAP hypothesis}. In general influence networks, how social influence takes effect is more complicated and is determined not only by individuals' accuracy and susceptibilities, but also by their network centralities. Simply put, if more accurate individuals have larger network centralities and are less susceptible to social influence, the wisdom is improved.
\paragraph*{Paper organization}
In \cref{S2}, we propose mathematizations for improving, optimizing and undermining of the wisdom in influence networks. \Cref{S3,S4} investigate how social influence improves, optimizes and undermines the wisdom at influence system level. In \cref{S5} we further apply our results to the FD opinion dynamics. \Cref{S6} concludes the paper.  

\paragraph*{Notation} 
$\mathbf{1}_{n}$ and $I_{n}$ denote the $n\times 1$ all-ones vector and the $n\times n$ identity matrix, respectively. $\mathbf{e}_{i}$ denotes the $i$-th standard basis vector with proper dimension. $\mathbb{R}$, $\mathbb{R}^{n}$ and $\mathbb{R}^{n\times n}$ denote, respectively, the sets of real number, $n$-dimensional real vector and $n\times n$ real matrix. For vector or square matrix $A^{(n)}$, the superscript $(n)$ denotes its dimension if it is necessary to clarify. Given $\delta\in\mathbb{R}^{n}$, $[\delta]=\diag(\delta)$ denotes a diagonal matrix with diagonal elements $\delta_{1}, \dots, \delta_{n}$. The $n$-simplex is denoted by $\Delta_{n}\!=\!\setdef{z\in\mathbb{R}^{n}}{z\geq 0, \mathbf{1}^{\top}_{n}z\!=\!1}$; $\interior{\Delta_{n}}\!=\!\setdef{z\in\mathbb{R}^{n}}{z> 0, \mathbf{1}^{\top}_{n}z\!=\!1}$ denotes its interior. A nonnegative matrix is row-stochastic (column-stochastic) if its row (column) sums are $1$; it is doubly-stochastic if both its row and column sums are $1$. The weighted digraph $\mathcal{G}(W)$ associated with nonnegative matrix $W$ is defined as: the node set is $\until n$; there is a directed edge $(i,j)$ from nodes $i$ to $j$ if and only if $W_{ij}>0$. $\mathcal{G}(W)$ is a star topology if all its directed edges are either from or to a center node. A strongly connected component (SCC) of $\mathcal{G}(W)$ is a maximal strongly connected subgraph. A SCC is called a sink SCC if there exists no directed edge from this SCC to others. For an irreducible matrix $W$, $\mathcal{G}(W)$ is called democratic if $W$ is doubly-stochastic; $\mathcal{G}(W)$ is called autocratic if $\mathcal{G}(W)$ is a star topology. 
\section{Mathematical formulation of the wisdom of crowds in influence networks}\label{S2}
\subsection{The wisdom of crowds without social influence}

Consider $n\geq 2$ individuals in an influence network interacting their
estimates for an unknown state with constant true value
$\mu\in\mathbb{R}$. Suppose that individuals do not know the exact value of $\mu$, but have fragmented knowledge or
clues to estimate it. Before interacting with others, each individual $i$ gives an initial estimate $y_{i}(0)$, independently, according to the knowledge or clues it possesses. Specifically, $y_{i}(0)$ is a random variable with expectation $\E[y_{i}(0)]=\mu$ and variance $\Var[y_{i}(0)]=\sigma_{i}^{2}>0$. $y_{1}(0),\dots,y_{n}(0)$ are independent since the initial estimates are generated independently by individuals. In this paper,
we employ the assumption in~\cite{FG:1907} that the collective estimate,
denoted by $y_{\col}$, is aggregated by averaging all individuals'
estimates arithmetically. Let
\begin{equation}
y_{\col}(0)=\Ave(y(0))=\frac{1}{n}\sum_{i=1}^{n}y_{i}(0)
\end{equation}
denote the initial collective estimate. Due to the independence of $y_{1}(0),\dots,y_{n}(0)$, we have $\E[y_{\col}(0)]=\mu$ and $\Var[y_{\col}(0)]=\frac{1}{n^2}\sum_{i=1}^{n}\sigma_{i}^{2}$. Since $\E[y_{i}(0)]=\E[y_{\col}(0)]=\mu$, a smaller variance indicates that the estimate tends to be closer to the truth. Denote $\sigma^{2}=(\sigma_{1}^{2},\dots,\sigma_{n}^{2})^{\top}$, $\sigma_{\max}^{2}=\max_{i}\sigma_{i}^{2}$ and $\sigma_{\min}^{2}=\min_{i}\sigma_{i}^{2}$. Note that $\Var[y_{\col}(0)]<\Ave(\sigma^2)$, and $\Var[y_{\col}(0)]<\sigma_{\min}^{2}$ if $\sigma_{\max}^{2}<n\sigma_{\min}^{2}$. That is, the variance of the initial collective estimate is smaller than the average variance of individuals' initial estimates and can be smaller than the variance of any single individual's initial estimate. This condensation of variance exhibits that simply averaging a group of individuals' independent estimates yields a better collective estimate which even outperforms the estimate of the best individual, known as the wisdom of crowds effect.   

\begin{remark}
\begin{enumerate}
\item The wisdom of crowds problem usually focuses on tasks for which individuals are unlikely to know the exact truth but are not clueless either \cite{JL-HR-FS-DH:11}. Here, we assume $\E[y_{i}(0)]=\mu$ to avoid the case that individuals' independent estimates are uninformative at all. 
\item We do not specify a particular distribution for each individual's initial estimate. It is possible that two individuals' initial estimates have different distributions, e.g., one is normally distributed and the other is binomially distributed. Intuitively, this depends on the nature of the task and individuals' local knowledge, as well as how they exploit the knowledge.
\item A direct measurement of the wisdom is the error of the collective estimate, such as the square error and the normalized absolute error respectively used in \cite{JL-HR-FS-DH:11} and \cite{JB-DB-DC:17}. In this paper, since the initial collective estimate is unbiased, we use its variance, i.e., the mean square error, to assess the level of wisdom. 
\end{enumerate}
\end{remark}

\subsection{Improving, optimizing and undermining of the wisdom in influence networks}
Suppose that after execution of a social influence process, individuals' estimates converge and are given by 
\begin{equation}\label{eq1}
\lim_{k\to\infty}y(k)=Vy(0),
\end{equation} 
where $V$ is the row-stochastic transition matrix of a weighted-average opinion dynamics. In other words, each individual's final estimate $\lim_{k\to\infty}y_{i}(k)=\sum_{j=1}^{n}V_{ij}y_{j}(0)$ is the weighted average of all individuals' initial estimates. Then, the final collective estimate is
\begin{align*}
\lim_{k\to\infty}y_{\col}(k)=\frac{1}{n}\sum_{i=1}^{n}\sum_{j=1}^{n}V_{ij}y_{j}(0)=\sum_{i=1}^{n}x_{i}y_{i}(0),
\end{align*}
where $x_{i}=\frac{1}{n}\sum_{j=1}^{n}V_{ji}$ indicates individual $i$'s social power exerted over the influence process and $x\in\Delta_{n}$ is the social power allocation of the influence system, as defined in~\cite{NEF:11}. Similarly, we have $\E[\lim_{k\to\infty}y_{\col}(k)]=\mu$ and $\Var[\lim_{k\to\infty}y_{\col}(k)]=\sum_{i=1}^{n}x_{i}^{2}\sigma_{i}^{2}$. Since $\E[\lim_{k\to\infty}y_{\col}(k)]=\E[y_{\col}(0)]$, $\Var[\lim_{k\to\infty}y_{\col}(k)]<\Var[y_{\col}(0)]$ means that the influence process condenses the collective variance thus improves the collective estimate, while $\Var[\lim_{k\to\infty}y_{\col}(k)]>\Var[y_{\col}(0)]$ implies that the influence process amplifies the collective variance thereby undermines the collective estimate.
\begin{definition}[Improving, optimizing, undermining of the wisdom and converging to the truth]\label{D1}
For social influence process~\eqref{eq1} with individuals' initial estimates $y_{i}(0)\sim (\mu,\sigma_{i}^{2})$, $i\in\until{n}$ and social power allocation $x\in\Delta_{n}$, we say social influence or system~\eqref{eq1} asymptotically 
\begin{enumerate}

\item improves the wisdom if $\sum_{i=1}^{n}x_{i}^{2}\sigma_{i}^{2}<\frac{1}{n^2}\sum_{i=1}^{n}\sigma_{i}^{2}$; \label{D1-1} 

\item optimizes the wisdom if $\sum_{i=1}^{n}x_{i}^{2}\sigma_{i}^{2}=\min_{z\in\Delta_{n}}\sum_{i=1}^{n}z_{i}^{2}\sigma_{i}^{2}$; \label{D1-2} 

\item undermines the wisdom if $\sum_{i=1}^{n}x_{i}^{2}\sigma_{i}^{2}>\frac{1}{n^2}\sum_{i=1}^{n}\sigma_{i}^{2}$.

\setlength\parindent{-2em} Additionally, we say 

\item the final collective estimate asymptotically converges to the truth in probability ({\it i.p.}) as the group size increases if $\lim_{n\to\infty}\Prob[\lim_{k\to\infty}y^{(n)}_{\col}(k)=\mu]=1$. \label{D1-3}

\end{enumerate}    
\end{definition}

\begin{remark}
Social influence process \cref{eq1} can model all the opinion dynamics where individuals' opinions converge to convex combinations of their initial opinions. Therefore, \cref{D1} is well-posed for influence processes such as the FD model, the Friedkin-Johnsen model and the Hegselmann-Krause model, etc. 
\end{remark}

In \cref{D1}, improving the wisdom means that the variance of the final collective estimate aggregated by the social influence process is smaller than the variance of the initial collective estimate obtained by averaging individuals' initial estimates. Additionally, optimizing the wisdom means that the variance of the final collective estimate achieves the minimum for given variances of individuals' initial estimates.
\subsection{Connections between improving the wisdom and converging to the truth}
The next lemma states that~\cref{D1}~\ref{D1-1} implies~\cref{D1}~\ref{D1-3} if individuals' variances are uniformly upper bounded as the group size increases.
\begin{lemma}[Improving the wisdom implies converging to the truth]\label{L1} 
For social influence process~\eqref{eq1}, suppose that there exists $\beta>0$ such that $\sigma^{2}_{i}\leq\beta<\infty$ for all $i\in\until{n}$. If social influence asymptotically improves the wisdom, then the final collective estimate asymptotically converges to the truth {\it i.p.} as $n\to \infty$.
\end{lemma}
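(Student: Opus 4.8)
The plan is to reduce the statement to a single application of Chebyshev's inequality, once I have shown that the variance of the final collective estimate vanishes as $n\to\infty$. First I would recall, from the discussion preceding \cref{D1}, that the final collective estimate $Y_{n}:=\lim_{k\to\infty}y^{(n)}_{\col}(k)=\sum_{i=1}^{n}x_{i}y_{i}(0)$ satisfies $\E[Y_{n}]=\mu$ and $\Var[Y_{n}]=\sum_{i=1}^{n}x_{i}^{2}\sigma_{i}^{2}$, where $x=x^{(n)}\in\Delta_{n}$ is the social power allocation associated with the group of size $n$. The whole argument hinges on controlling this variance uniformly in the choice of $x^{(n)}$.

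The key observation is that the improving-wisdom hypothesis, which is assumed to hold for every $n\geq 2$, furnishes exactly the bound I need: condition \ref{D1-1} of \cref{D1} gives $\Var[Y_{n}]=\sum_{i=1}^{n}x_{i}^{2}\sigma_{i}^{2}<\frac{1}{n^{2}}\sum_{i=1}^{n}\sigma_{i}^{2}$ for each $n$. I would then invoke the uniform bound $\sigma_{i}^{2}\leq\beta$ to estimate $\frac{1}{n^{2}}\sum_{i=1}^{n}\sigma_{i}^{2}\leq\frac{1}{n^{2}}\cdot n\beta=\frac{\beta}{n}$, so that $\Var[Y_{n}]<\beta/n$, which tends to $0$ as $n\to\infty$.

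Finally, since $\E[Y_{n}]=\mu$ for all $n$, Chebyshev's inequality yields, for every $\varepsilon>0$, the bound $\Prob\big[\,|Y_{n}-\mu|\geq\varepsilon\,\big]\leq\Var[Y_{n}]/\varepsilon^{2}<\beta/(n\varepsilon^{2})\to 0$ as $n\to\infty$. This is precisely the asymptotic convergence to the truth {\it i.p.} in the sense of condition \ref{D1-3} of \cref{D1}, completing the proof.

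The argument is essentially mechanical once the two ingredients are aligned, so I do not expect a genuine obstacle; the only point deserving care is recognizing that the definition of improving wisdom already delivers the required per-$n$ variance bound, while the uniform boundedness hypothesis $\sigma_{i}^{2}\leq\beta$ is exactly what upgrades the crude factor $\frac{1}{n^{2}}\sum_{i}\sigma_{i}^{2}$ into an $O(1/n)$ decay. Without such a uniform bound the quantity $\frac{1}{n^{2}}\sum_{i}\sigma_{i}^{2}$ need not vanish, so this hypothesis is doing real work and should be flagged as the load-bearing assumption.
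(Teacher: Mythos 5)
Your proposal is correct and follows essentially the same route as the paper's proof in \cref{AP1}: both use the improving-wisdom hypothesis to bound $\Var[\lim_{k\to\infty}y^{(n)}_{\col}(k)]$ by $\Var[y^{(n)}_{\col}(0)]=\frac{1}{n^2}\sum_{i=1}^{n}\sigma_i^2$, the uniform bound $\sigma_i^2\leq\beta$ to show this is at most $\beta/n\to 0$, and Chebyshev's inequality to conclude convergence in probability. No gaps; your closing remark about the uniform bound being load-bearing matches the role it plays in the paper's argument.
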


\Cref{L1} is proved in \cref{AP1}. In~\cref{L1}, if individuals' estimates achieve consensus, then each individual's final estimate also converges to the truth {\it i.p.} as $n\to\infty$. In~\cite{BG-MOJ:10}, it has been proved that for the FD model, if the influence matrix sequence satisfies balance and minimal out-dispersion conditions, individuals' final estimates converge to the truth {\it i.p.} as $n\to\infty$. \Cref{L1} provides a different condition from the perspective of condensing the collective variance of finite-size groups. Therefore, all results on improving the wisdom of finite-size groups imply the collective estimate converges to the truth as the group size increases. To avoid trivialities, we shall not state the results about converging to the truth in the sequel.

\section{The role of social power in improving and optimizing the wisdom}\label{S3}
\subsection{Consistency notions of influence systems}
According to~\cref{D1}, whether an influence system improves or undermines the wisdom depends exclusively on the relation between its social power allocations and individuals' variances. Given individuals' variances $\sigma^{2}\in\mathbb{R}^{n}$, define $\map{\mathcal{E}_{\sigma^{2}}}{\Delta_{n}}{\mathbb{R}_{>0}}$ by $\mathcal{E}_{\sigma^{2}}(z)=\sum_{i=1}^{n}z_{i}^{2}\sigma_{i}^{2}$, then $\mathcal{E}_{\sigma^{2}}(x)$ is the final collective variance of system \eqref{eq1} with social power allocation $x$. Hence, by~\cref{D1}~\ref{D1-1}, $\mathcal{A}_{\sigma^{2}}=\setdef{z\in \Delta_{n}}{\mathcal{E}_{\sigma^{2}}(z)<\mathcal{E}_{\sigma^{2}}(\mathbf{1}_{n}/n)}$ represents the improvement region, i.e., the wisdom is improved if and only if $x\in\mathcal{A}_{\sigma^{2}}$. In fact, $\mathcal{A}_{\sigma^{2}}$ is the interior of the intersection of the $n$-simplex $\Delta_{n}$ and the hyperellipsoid 
\begin{align*}
\sum_{i=1}^{n}\frac{z_{i}^2}{\frac{\sum_{j=1}^{n}\sigma_{j}^{2}}{n^{2}\sigma_{i}^{2}}}=1,
\end{align*}
see \cref{fig1}. Intuitively, to improve the wisdom an influence system needs to allocate larger social power to more accurate individuals, that is, the ordering of its social power allocations should be consistent with the ordering of individuals' accuracy, where individual $i$'s accuracy is indicated by $1/\sigma^{2}_{i}$. However, \cref{f1-a,f1-b} show that this is neither sufficient nor necessary to improve the wisdom. Based on this observation, we define the following notions of consistency progressively.

\begin{figure}[t]
\centering
 \subfloat[t][$\sigma^{2}=(1, 2, 3)^{\top}$]{
\begin{minipage}[t]{0.35\textwidth}\label{f1-a}
 \centering
  \includegraphics[width=\hsize]{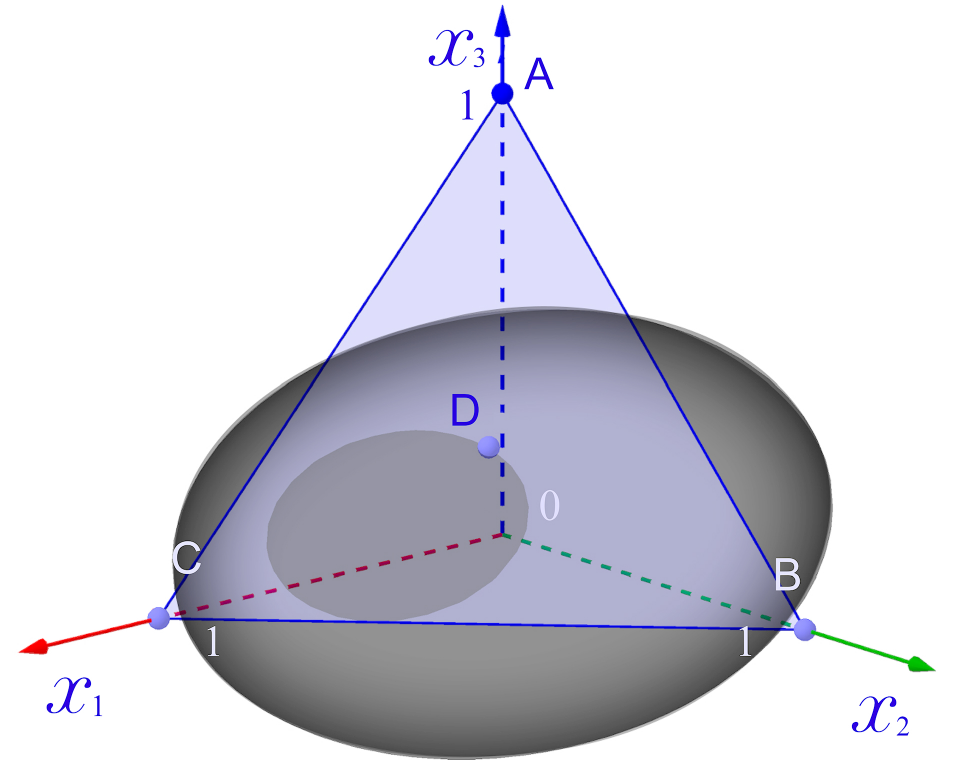}\\
 \end{minipage}
}
\hspace{4ex}
\subfloat[t][$\sigma^{2}=(1, 3, 4)^{\top}$]{
\begin{minipage}[t]{0.35\textwidth}\label{f1-b}
 \centering
  \includegraphics[width=\hsize]{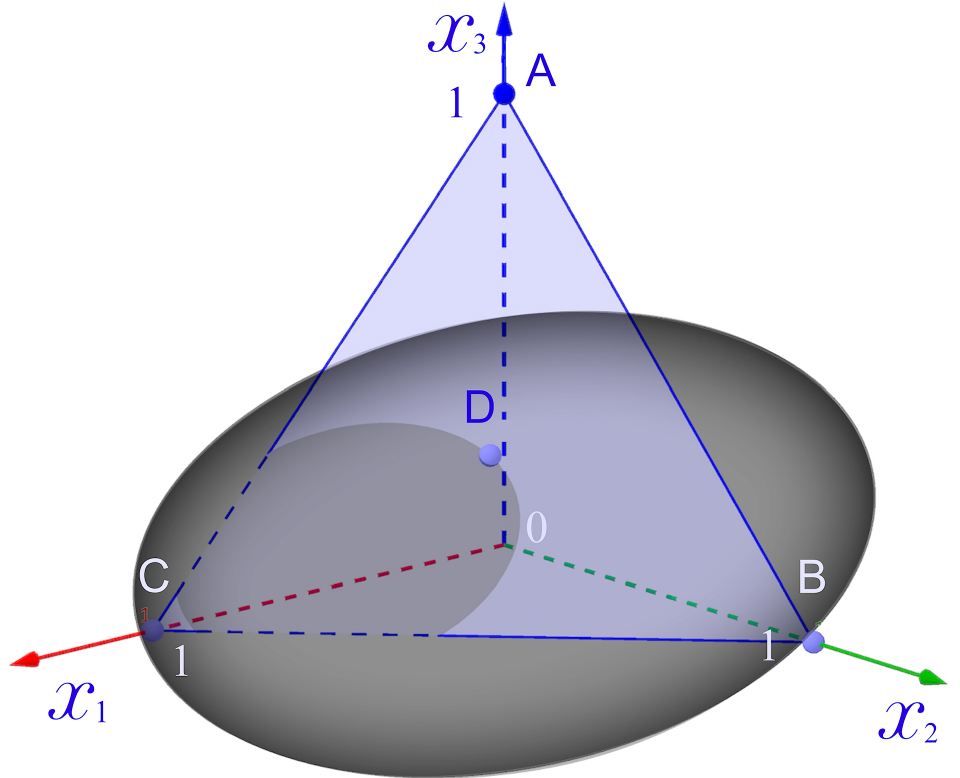}\\
  \end{minipage}
}
\vspace{-2ex}

\subfloat[t][$\sigma^{2}=(1, 4, 9)^{\top}$]{
\begin{minipage}[t]{0.35\textwidth}\label{f1-c}
 \centering
  \includegraphics[width=\hsize]{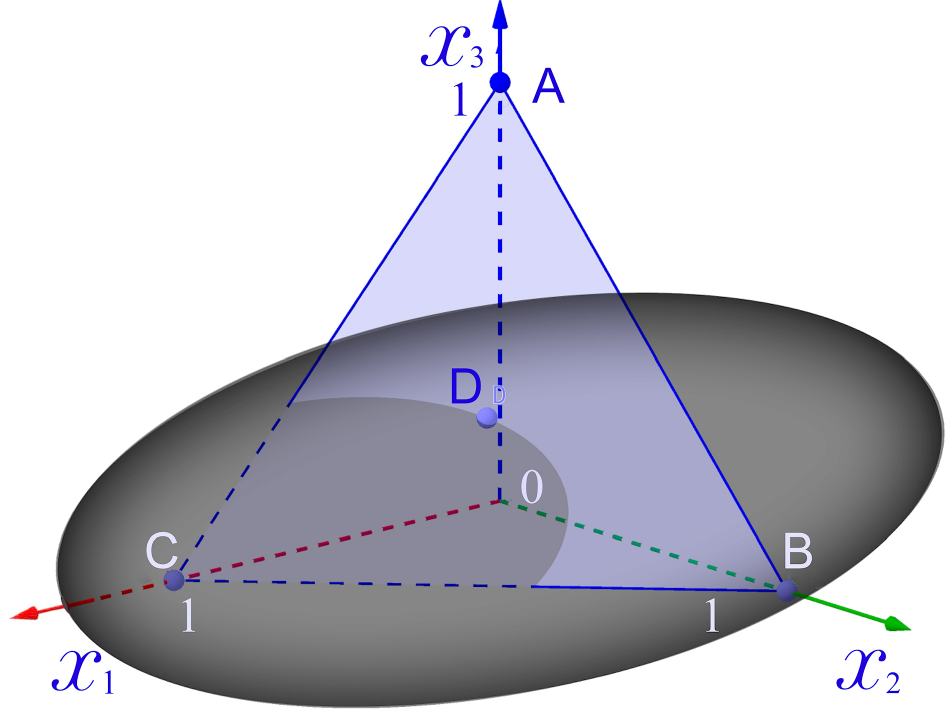}\\
 \end{minipage}
}
\hspace{4ex}
\subfloat[t][$\sigma^{2}=(2, 1, 16)^{\top}$]{
\begin{minipage}[t]{0.375\textwidth}\label{f1-d}
 \centering
  \includegraphics[width=\hsize]{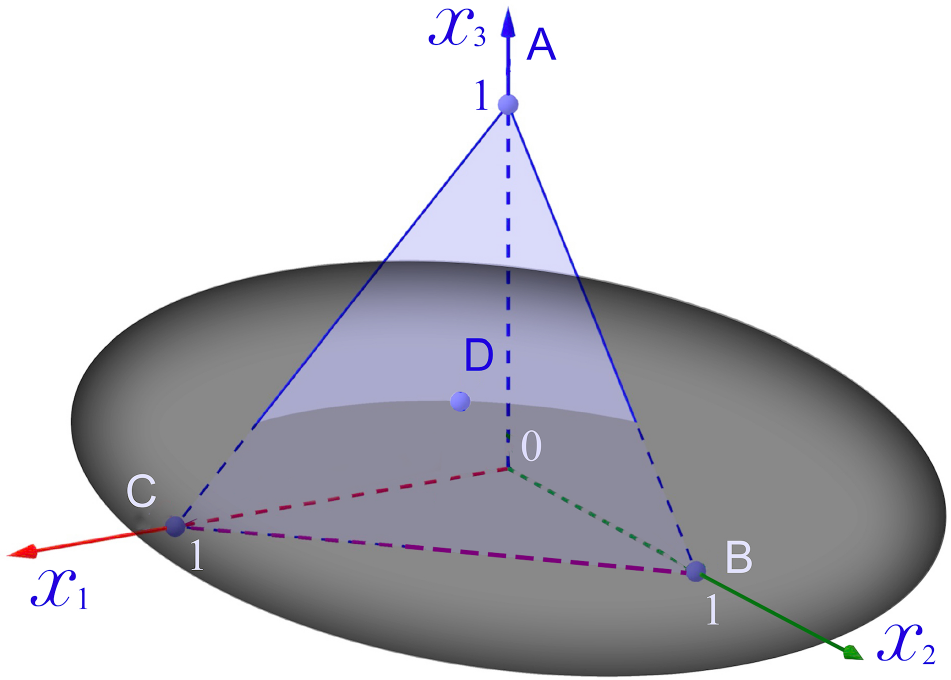}\\
 \end{minipage}
}
\caption{The improvement regions $\mathcal{A}_{\sigma^{2}}$ for different $\sigma^{2}$ depicted in $\Delta_{3}$. For each $\sigma^{2}$, $\mathcal{A}_{\sigma^{2}}$ is the interior of the intersection of $\Delta_{3}$ and the ellipsoid depending on $\sigma^{2}$; if $\sigma^{2}\in\Span{\mathbf{1}_{n}}$, the ellipsoid becomes a ball whose intersection with $\Delta_{3}$ is the point $D=\mathbf{1}_{n}/3$.}\label{fig1}
\end{figure}

\begin{definition}[Consistency notions of influence systems]\label{D2}
Let $x_{i}$ and $1/\sigma^{2}_{i}$ be individual $i$'s social power and accuracy, respectively. We say system~\eqref{eq1} is 
\begin{enumerate}
\item ordering-consistent if $x_{i}\leq x_{j}$ for all $\sigma^{2}_{i}\geq\sigma^{2}_{j}$ and $x=\mathbf{1}_{n}/n$ only if $\sigma^{2}\in\Span{\mathbf{1}_{n}}$;
\item gap-consistent if it is ordering-consistent and $x_{i}\sigma^{2}_{i}\geq x_{j}\sigma^{2}_{j}$ for all $\sigma^{2}_{i}\geq\sigma^{2}_{j}$; \label{D2-3}
\item maximally consistent if $x\in\interior{\Delta_{n}}$ and $\frac{x_{i}}{x_{j}}=\frac{\sigma^{2}_{j}}{\sigma^{2}_{i}}$ for all $i,j\in\until{n}$.
\end{enumerate} 
\end{definition}
\begin{remark}\label{r3}
In \cref{D2}~\ref{D2-3}, if there exist $i,j$ such that $x_{j}>x_{i}=0$, then we have $x_{j}\sigma_{j}^{2}\leq x_{i}\sigma_{i}^{2}=0$, which is contradicted with $x_{j}\sigma_{j}^{2}>0$. Hence, $x\in\interior{\Delta_{n}}$. That is, both gap-consistency and maximal consistency require that the influence system allocates strictly positive social power to all individuals. 
\end{remark}

Ordering-consistency requires that the influence system allocates more social power to more accurate individuals, gap-consistency additionally excludes the case that it assigns too much social power to more accurate individuals, while maximal consistency demands that its social power allocation is exactly proportional to individuals' accuracy. 
\subsection{The effects of gap-consistency and maximal consistency}\label{s3.2}
As shown in~\cref{fig1}, the improvement region $\mathcal{A}_{\sigma^{2}}$ is the interior of the intersection of the $n$-simplex and a hyperellipsoid. The next lemma captures the properties of $\mathcal{A}_{\sigma^{2}}$. 
\begin{lemma}[Properties of the improvement region]\label{L3}
For $\mathcal{E}_{\sigma^{2}}(z)=\sum_{i=1}^{n}z_{i}^{2}\sigma_{i}^{2}$ and $\mathcal{A}_{\sigma^{2}}=\setdef{z\in \Delta_{n}}{\mathcal{E}(z)<\mathcal{E}(\frac{\mathbf{1}_{n}}{n})}$ defined above, let $\mathcal{\tilde{A}}_{\sigma^{2}}=\setdef{z\in\interior{ \Delta_{n}}\setminus \{\frac{\mathbf{1}_{n}}{n}\}}{1\leq\frac{z_{i}}{z_{i+1}}\leq \frac{\sigma_{i+1}^{2}}{\sigma_{i}^{2}}, i\in \until{n-1}}$. The following statements hold:
\begin{enumerate}
\item $\mathcal{E}_{\sigma^{2}}(z)$ is strictly convex on $\Delta_{n}$ and consequently $\mathcal{A}_{\sigma^{2}}$ is convex; \label{L3-0}
\item $\mathcal{A}_{\sigma^{2}}\neq\emptyset$ if and only if $\sigma^{2}\notin\Span{\mathbf{1}_{n}}$; \label{L3-1}
\item $\mathcal{\tilde{A}}_{\sigma^{2}}\subset\mathcal{A}_{\sigma^{2}}$. \label{L3-2}
\end{enumerate}
\end{lemma}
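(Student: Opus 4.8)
The plan is to treat \ref{L3-0} and \ref{L3-1} with standard convex-analytic facts and to concentrate the effort on the inclusion \ref{L3-2}, where a monotone-rearrangement (Chebyshev) argument is the crux. For \ref{L3-0}, I would note that $\mathcal{E}_{\sigma^2}$ is a quadratic form with Hessian $2\diag(\sigma_1^2,\dots,\sigma_n^2)$, which is positive definite because every $\sigma_i^2>0$; hence $\mathcal{E}_{\sigma^2}$ is strictly convex on $\mathbb{R}^n$ and a fortiori on $\Delta_n$. The region $\mathcal{A}_{\sigma^2}$ is the intersection of the convex set $\Delta_n$ with the strict sublevel set $\setdef{z}{\mathcal{E}_{\sigma^2}(z)<\mathcal{E}_{\sigma^2}(\mathbf{1}_n/n)}$ of a convex function, and such sublevel sets are convex, so $\mathcal{A}_{\sigma^2}$ is convex.

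For \ref{L3-1} I would argue both implications. If $\sigma^2=c\mathbf{1}_n$ with $c>0$, then $\mathcal{E}_{\sigma^2}(z)=c\sum_i z_i^2$, which on $\Delta_n$ is uniquely minimized at the centroid (by Cauchy--Schwarz, $\sum_i z_i^2\ge 1/n$ with equality iff $z=\mathbf{1}_n/n$); hence $\mathcal{E}_{\sigma^2}(z)\ge\mathcal{E}_{\sigma^2}(\mathbf{1}_n/n)$ everywhere and $\mathcal{A}_{\sigma^2}=\emptyset$. Conversely, if the $\sigma_i^2$ are not all equal, I would pick indices with $\sigma_i^2<\sigma_j^2$ and perturb the centroid along $\mathbf{e}_i-\mathbf{e}_j$: a direct expansion yields $\mathcal{E}_{\sigma^2}(\mathbf{1}_n/n+t(\mathbf{e}_i-\mathbf{e}_j))=\mathcal{E}_{\sigma^2}(\mathbf{1}_n/n)+\tfrac{2t}{n}(\sigma_i^2-\sigma_j^2)+t^2(\sigma_i^2+\sigma_j^2)$, whose negative linear term makes this value strictly smaller than $\mathcal{E}_{\sigma^2}(\mathbf{1}_n/n)$ for small $t>0$, so $\mathcal{A}_{\sigma^2}\neq\emptyset$.

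For \ref{L3-2}, fix $z\in\mathcal{\tilde{A}}_{\sigma^2}$. The two-sided constraint $1\le z_i/z_{i+1}\le\sigma_{i+1}^2/\sigma_i^2$ says exactly that $z_1\ge\cdots\ge z_n$ and that the products satisfy $z_1\sigma_1^2\le\cdots\le z_n\sigma_n^2$; it also forces $\sigma_1^2\le\cdots\le\sigma_n^2$, since otherwise $\mathcal{\tilde{A}}_{\sigma^2}=\emptyset$ and there is nothing to prove. I would then rewrite the gap as $\mathcal{E}_{\sigma^2}(z)-\mathcal{E}_{\sigma^2}(\mathbf{1}_n/n)=\sum_i\sigma_i^2(z_i-\tfrac1n)(z_i+\tfrac1n)=\sum_i b_i w_i$, where $b_i=z_i-\tfrac1n$ and $w_i=\sigma_i^2(z_i+\tfrac1n)=z_i\sigma_i^2+\tfrac1n\sigma_i^2$. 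The crucial structural point is that $w$ is non-decreasing, being the sum of the non-decreasing sequences $z_i\sigma_i^2$ and $\tfrac1n\sigma_i^2$, whereas $b$ is non-increasing with $\sum_i b_i=0$. Chebyshev's sum inequality for oppositely ordered sequences then gives $\sum_i b_i w_i\le\tfrac1n(\sum_i b_i)(\sum_i w_i)=0$, i.e.\ $\mathcal{E}_{\sigma^2}(z)\le\mathcal{E}_{\sigma^2}(\mathbf{1}_n/n)$.

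Upgrading this to strict inequality is the step I expect to be the main obstacle, and the place where $z\ne\mathbf{1}_n/n$, $z\in\interior\Delta_n$, and $\sigma^2\notin\Span{\mathbf{1}_n}$ are all needed. I would run the proof of Chebyshev through the identity $2n\sum_i b_i w_i=\sum_{i,j}(w_i-w_j)(b_i-b_j)+2(\sum_i w_i)(\sum_i b_i)$, noting that each summand $(w_i-w_j)(b_i-b_j)\le 0$ because $w$ is non-decreasing and $b$ non-increasing, so strictness reduces to exhibiting one strictly negative term. Since $b$ is non-increasing with zero sum and $b\ne 0$, necessarily $b_1>0>b_n$, so $b_1>b_n$; and since $\sigma^2$ is non-decreasing but not constant (a constant $\sigma^2$ would force $z=\mathbf{1}_n/n$, which is excluded), we have $\sigma_1^2<\sigma_n^2$, whence $w_n-w_1\ge\tfrac1n(\sigma_n^2-\sigma_1^2)>0$. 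Thus the $(1,n)$ term $(w_1-w_n)(b_1-b_n)$ is strictly negative, forcing $\sum_i b_i w_i<0$ and hence $\mathcal{E}_{\sigma^2}(z)<\mathcal{E}_{\sigma^2}(\mathbf{1}_n/n)$, i.e.\ $z\in\mathcal{A}_{\sigma^2}$.
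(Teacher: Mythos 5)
Your proof is correct, and all three parts go through. Parts (i) and (ii) are essentially the paper's own arguments in different clothing: the paper verifies strict convexity by expanding $\lambda \mathcal{E}_{\sigma^{2}}(z)+(1-\lambda)\mathcal{E}_{\sigma^{2}}(z^{\prime})-\mathcal{E}_{\sigma^{2}}(\lambda z+(1-\lambda)z^{\prime})=\lambda(1-\lambda)\sum_{i}(z_{i}-z^{\prime}_{i})^{2}\sigma_{i}^{2}$ rather than citing the Hessian, and its proof of (ii) uses the same centroid perturbation you use (there written as $\tfrac{\mathbf{1}_{n}}{n}-\alpha\mathbf{e}_{i}+\alpha\mathbf{e}_{j}$ with $\sigma_{i}^{2}>\sigma_{j}^{2}$). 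The interesting divergence is in (iii). The paper invokes the mean value theorem to write $\mathcal{E}_{\sigma^{2}}(z)-\mathcal{E}_{\sigma^{2}}(\tfrac{\mathbf{1}_{n}}{n})=\nabla^{\top}\mathcal{E}_{\sigma^{2}}(z^{\prime})(z-\tfrac{\mathbf{1}_{n}}{n})$ at an intermediate point $z^{\prime}$, then splits the indices at the crossing index $l$ where $z_{i}$ drops below $\tfrac{1}{n}$, bounding the positive block by the weight $z^{\prime}_{l}\sigma_{l}^{2}$ and the negative block by $z^{\prime}_{l+1}\sigma_{l+1}^{2}$; strictness comes from $\sigma_{l}^{2}<\sigma_{l+1}^{2}$ at that crossing. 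You instead factor $z_{i}^{2}-\tfrac{1}{n^{2}}=(z_{i}-\tfrac{1}{n})(z_{i}+\tfrac{1}{n})$ directly and conclude by Chebyshev's sum inequality for oppositely ordered sequences, via the pairwise identity $2n\sum_{i}b_{i}w_{i}=\sum_{i,j}(w_{i}-w_{j})(b_{i}-b_{j})+2(\sum_{i}w_{i})(\sum_{i}b_{i})$, locating the strict term at the extreme pair $(1,n)$ rather than at the crossing pair $(l,l+1)$. Amusingly, the two decompositions coincide: for a quadratic the mean value point is the midpoint, so the paper's weights $2z^{\prime}_{i}\sigma_{i}^{2}$ are exactly your $w_{i}=\sigma_{i}^{2}(z_{i}+\tfrac{1}{n})$. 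What your route buys is a purely algebraic argument (no appeal to the mean value theorem, no need to identify the crossing index), with a clean strictness certificate $b_{1}>0>b_{n}$ and $w_{n}-w_{1}\geq\tfrac{1}{n}(\sigma_{n}^{2}-\sigma_{1}^{2})>0$; what the paper's route buys is that it never needs Chebyshev's inequality or the double-sum identity, only a two-term comparison across the crossing. Both arguments use the same structural facts ($z$ non-increasing, $z_{i}\sigma_{i}^{2}$ non-decreasing, $\sigma^{2}$ non-decreasing and non-constant on $\mathcal{\tilde{A}}_{\sigma^{2}}\neq\emptyset$), and your handling of the degenerate cases (empty $\mathcal{\tilde{A}}_{\sigma^{2}}$, constant $\sigma^{2}$) matches the paper's.
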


\Cref{L3} is proved in~\cref{AP2}. \Cref{L3}~\ref{L3-0} shows that $\mathcal{E}_{\sigma^{2}}(z)$ is strictly convex on $\Delta_{n}$, which implies that if there exist $l\geq 2$ and $x^{i}\in\Delta_{n}$ for $i\in\until{l}$ such that system~\eqref{eq1} with social power allocation $x^{i}$ improves the wisdom, then for all $\omega\in\Delta_{l}$, system~\eqref{eq1} with social power allocation $\sum_{i=1}^{l}\omega_{i}x^{i}$ improves the wisdom.
\begin{theorem}[Consistent influence systems improve/optimize the wisdom]\label{T2}
For social influence process~\eqref{eq1} with social power allocation $x\in\Delta_{n}$ and individuals' variances $\sigma^{2}\in\mathbb{R}^{n}$, the following statements hold:
\begin{enumerate}
\item if system \eqref{eq1} asymptotically improves the wisdom, then $\sigma^{2}\notin\Span{\mathbf{1}_{n}}$; \label{T2-1}

\item if $\sigma^{2}\notin\Span{\mathbf{1}_{n}}$ and system~\eqref{eq1} is gap-consistent, then it asymptotically improves the wisdom;\label{T2-2}

\item system \eqref{eq1} asymptotically optimizes the wisdom if and only if it is maximally consistent. And if so, the variance of the final collective estimate achieves the minimum $\mathcal{E}^{*}_{\sigma^{2}}=\frac{1}{\sum_{i=1}^{n}\frac{1}{\sigma_{i}^{2}}}$ with the optimal social power allocation
\begin{align*}
x^{*}=\frac{[\sigma^{2}]^{-1}\mathbf{1}_{n}}{\mathbf{1}^{\top}_{n}[\sigma^{2}]^{-1}\mathbf{1}_{n}}.
\end{align*} \label{T2-3}
\end{enumerate}
\end{theorem}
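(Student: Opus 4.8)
The plan is to prove the three statements in turn, relying on \cref{L3} for parts~\ref{T2-1} and~\ref{T2-2} and on a direct constrained-optimization argument for part~\ref{T2-3}. Throughout I use that improving the wisdom means, by \cref{D1}~\ref{D1-1}, precisely $x\in\mathcal{A}_{\sigma^{2}}$, together with the fact that both $\mathcal{E}_{\sigma^{2}}$ and $\mathcal{A}_{\sigma^{2}}$ are invariant under simultaneously permuting the entries of $z$ and $\sigma^{2}$ (the point $\mathbf{1}_{n}/n$ being fixed by every permutation). Part~\ref{T2-1} is then immediate: if the wisdom is improved then $x\in\mathcal{A}_{\sigma^{2}}$, so $\mathcal{A}_{\sigma^{2}}\neq\emptyset$, and \cref{L3}~\ref{L3-1} forces $\sigma^{2}\notin\Span{\mathbf{1}_{n}}$. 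Equivalently, if $\sigma^{2}=c\,\mathbf{1}_{n}\in\Span{\mathbf{1}_{n}}$ then $\mathcal{E}_{\sigma^{2}}(x)=c\sum_{i}x_{i}^{2}\geq c/n=\mathcal{E}_{\sigma^{2}}(\mathbf{1}_{n}/n)$ by Cauchy--Schwarz, so no improvement can occur.

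For part~\ref{T2-2}, the key step is to show that gap-consistency places $x$ inside the set $\mathcal{\tilde{A}}_{\sigma^{2}}$ from \cref{L3}, after which \cref{L3}~\ref{L3-2} closes the argument. Using permutation invariance, I relabel the individuals so that $\sigma^{2}_{1}\leq\cdots\leq\sigma^{2}_{n}$. For any consecutive pair we have $\sigma^{2}_{i+1}\geq\sigma^{2}_{i}$, so ordering-consistency yields $x_{i+1}\leq x_{i}$ while the gap condition of \cref{D2}~\ref{D2-3} yields $x_{i+1}\sigma^{2}_{i+1}\geq x_{i}\sigma^{2}_{i}$. By \cref{r3}, gap-consistency forces $x\in\interior{\Delta_{n}}$, so all entries are strictly positive and the two inequalities rearrange to $1\leq x_{i}/x_{i+1}\leq\sigma^{2}_{i+1}/\sigma^{2}_{i}$ for every $i\in\until{n-1}$. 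Finally, ordering-consistency together with $\sigma^{2}\notin\Span{\mathbf{1}_{n}}$ rules out $x=\mathbf{1}_{n}/n$. Hence $x\in\mathcal{\tilde{A}}_{\sigma^{2}}\subset\mathcal{A}_{\sigma^{2}}$ and the wisdom is improved. Ties $\sigma^{2}_{i}=\sigma^{2}_{i+1}$ cause no difficulty: applying the two defining inequalities in both directions forces $x_{i}=x_{i+1}$, matching the ratio bound that then collapses to $1$.

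For part~\ref{T2-3}, I would solve $\min_{z\in\Delta_{n}}\mathcal{E}_{\sigma^{2}}(z)$ directly. By \cref{L3}~\ref{L3-0}, $\mathcal{E}_{\sigma^{2}}$ is strictly convex on the compact convex set $\Delta_{n}$, so it has a unique minimizer $x^{*}$. Writing the Lagrangian $\sum_{i}z_{i}^{2}\sigma^{2}_{i}-\lambda(\mathbf{1}_{n}^{\top}z-1)$ and setting its gradient to zero gives $z_{i}=\lambda/(2\sigma^{2}_{i})$; imposing $\mathbf{1}_{n}^{\top}z=1$ determines $\lambda=2/(\mathbf{1}_{n}^{\top}[\sigma^{2}]^{-1}\mathbf{1}_{n})$ and hence the claimed $x^{*}=[\sigma^{2}]^{-1}\mathbf{1}_{n}/(\mathbf{1}_{n}^{\top}[\sigma^{2}]^{-1}\mathbf{1}_{n})$. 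Since every entry of $x^{*}$ is positive, $x^{*}\in\interior{\Delta_{n}}$, the nonnegativity constraints are inactive, and this stationary point is the global minimizer; substituting back gives $\mathcal{E}_{\sigma^{2}}(x^{*})=1/\sum_{i}(1/\sigma^{2}_{i})=\mathcal{E}^{*}_{\sigma^{2}}$. By uniqueness of the minimizer, optimizing the wisdom in the sense of \cref{D1}~\ref{D1-2} holds if and only if $x=x^{*}$; and the conditions $x\in\interior{\Delta_{n}}$ with $x_{i}/x_{j}=\sigma^{2}_{j}/\sigma^{2}_{i}$ defining maximal consistency in \cref{D2} pin down exactly the point $x^{*}$ on the simplex, which yields the asserted equivalence.

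I expect the only genuinely delicate step to be the reduction in part~\ref{T2-2}: one must check that gap-consistency, stated over all pairs with $\sigma^{2}_{i}\geq\sigma^{2}_{j}$, translates under sorting into the consecutive-ratio description of $\mathcal{\tilde{A}}_{\sigma^{2}}$, and that the boundary exclusions (namely $x\in\interior{\Delta_{n}}$ via \cref{r3} and $x\neq\mathbf{1}_{n}/n$) are genuinely in force, including when variances tie. Parts~\ref{T2-1} and~\ref{T2-3} are comparatively routine, being respectively a one-line consequence of \cref{L3}~\ref{L3-1} and a standard KKT computation justified by the strict convexity in \cref{L3}~\ref{L3-0}.
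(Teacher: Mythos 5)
Your proposal is correct and takes essentially the same route as the paper's proof: part~\ref{T2-1} follows from \cref{L3}~\ref{L3-1}, part~\ref{T2-2} sorts the variances via a permutation and verifies the consecutive-ratio description of $\mathcal{\tilde{A}}_{\sigma^{2}}$ so that \cref{L3}~\ref{L3-2} applies, and part~\ref{T2-3} solves the simplex-constrained minimization with a Lagrangian. Your extra care (handling ties $\sigma^{2}_{i}=\sigma^{2}_{i+1}$, and invoking strict convexity plus inactive nonnegativity constraints to justify uniqueness of the minimizer and hence the ``if and only if'') only makes explicit what the paper's terser proof leaves implicit.
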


\begin{proof}
Statement~\ref{T2-1} immediately follows from \cref{L3}~\ref{L3-1}. Regarding~\ref{T2-2}, since $\sigma^{2}\notin\Span{\mathbf{1}_{n}}$ and system~\eqref{eq1} is gap-consistent, by \cref{r3}, $x\in\interior{\Delta_{n}}\setminus \{\frac{\mathbf{1}_{n}}{n}\}$. Let $P\in\mathbb{R}^{n}$ be the permutation matrix such that $\tilde{\sigma}^{2}=P\sigma^{2}$ satisfies $\tilde{\sigma}_{i}^{2}\leq \tilde{\sigma}_{i+1}^{2}$, $i\in\until{n-1}$, by the definition of gap-consistency, $\tilde{x}=Px$ satisfies 
\begin{align*}
1\leq\frac{\tilde{x}_{i}}{\tilde{x}_{i+1}}\leq\frac{\tilde{\sigma}_{i+1}^{2}}{\tilde{\sigma}_{i}^{2}} 
\end{align*} 
for all $i\in\until{n-1}$, which means that $\tilde{x}\in\mathcal{\tilde{A}}_{\tilde{\sigma}^{2}}$. By \cref{L3}~\ref{L3-2}, $\mathcal{\tilde{A}}_{\sigma^{2}}\subset\mathcal{A}_{\sigma^{2}}$ for any given $\sigma^{2}$. Hence, we obtain $\tilde{x}\in\mathcal{A}_{\tilde{\sigma}^{2}}$, which is equivalent to $x\in\mathcal{A}_{\sigma^{2}}$. Therefore, social influence asymptotically improves the wisdom.

Regarding~\ref{T2-3}, define $\map{L}{\Delta_{n}\times \mathbb{R}}{\mathbb{R}}$ by $L(z,\lambda)=\sum_{i=1}^{n}z_{i}^{2}\sigma_{i}^{2}+\lambda (\mathbf{1}^{\top}_{n}z-1)$. Then, the optimal solution $z^{*}$ satisfies $2z_{i}^{*}\sigma_{i}^{2}+\lambda=0$ for all $i\in\until{n}$ and $\sum_{j=1}^{n}z_{j}^{*}=1$.
That is, the minimum collective estimate is achieved by 
\begin{align*}
x_{i}^{*}=\frac{\frac{1}{\sigma_{i}^{2}}}{\sum_{j=1}^{n}\frac{1}{\sigma_{j}^{2}}},
\end{align*} 
which is equivalent to $\frac{x_{i}^{*}}{x_{j}^{*}}=\frac{\sigma_{j}^{2}}{\sigma_{i}^{2}}$ for all $i,j\in\until{n}$. Moreover, $\mathcal{E}^{*}_{\sigma^{2}}=\mathcal{E}_{\sigma^2}(x^{*})=\frac{1}{\sum_{i=1}^{n}\frac{1}{\sigma_{i}^{2}}}$. Thus, social influence asymptotically optimizes the wisdom if and only if system~\eqref{eq1} is maximally consistent. 
\end{proof}

\Cref{T2} suggests that the wisdom can be improved only if individuals' variances are non-uniform; and if so, social influence improves the wisdom if system~\eqref{eq1} allocates relatively larger, but not too much, social power to relatively more accurate individuals; social influence optimizes the wisdom if and only if the social power allocation of system~\eqref{eq1} is exactly proportional to individuals' accuracy. In \cref{fig2}, quadrilaterals $DFOH$ (excluding the points $D$) and points $O$ depict the gap-consistency regions and the maximal consistency points of social power allocations in the $3$-simplex for different distributions of variances.  
\begin{figure}[t]
\centering
 \subfloat[t][$\sigma^{2}=(1, 2, 3)^{\top}$]{
\begin{minipage}[t]{0.35\textwidth}\label{f2-a}
 \centering
  \includegraphics[width=\hsize]{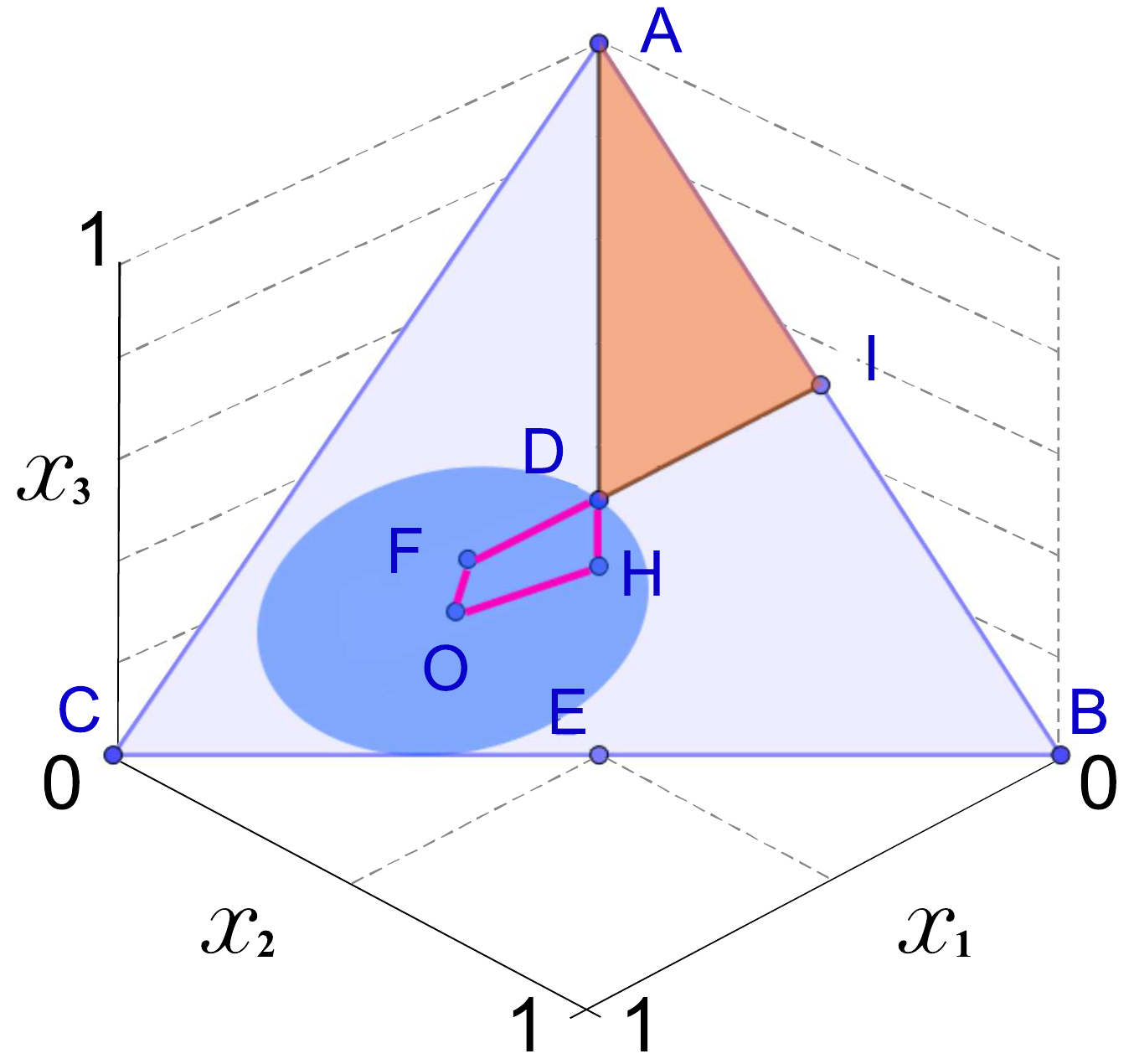}\\
 \end{minipage}
}
\hspace{4ex}
\subfloat[t][$\sigma^{2}=(3, 4, 1)^{\top}$]{
\begin{minipage}[t]{0.35\textwidth}\label{f2-b}
 \centering
  \includegraphics[width=\hsize]{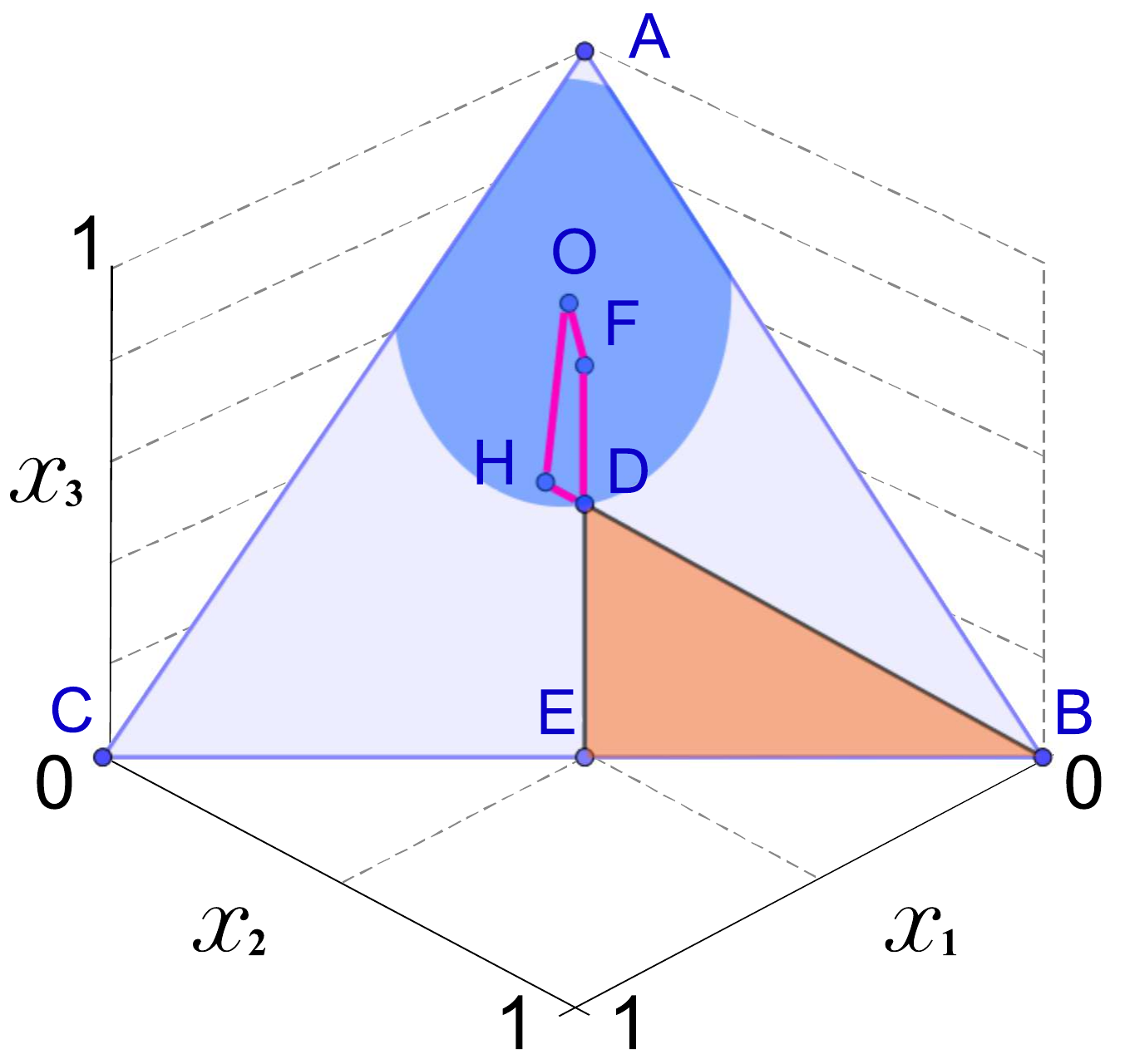}\\
  \end{minipage}
}
\vspace{-2ex}

\subfloat[t][$\sigma^{2}=(1, 4, 9)^{\top}$]{
\begin{minipage}[t]{0.35\textwidth}\label{f2-c}
 \centering
  \includegraphics[width=\hsize]{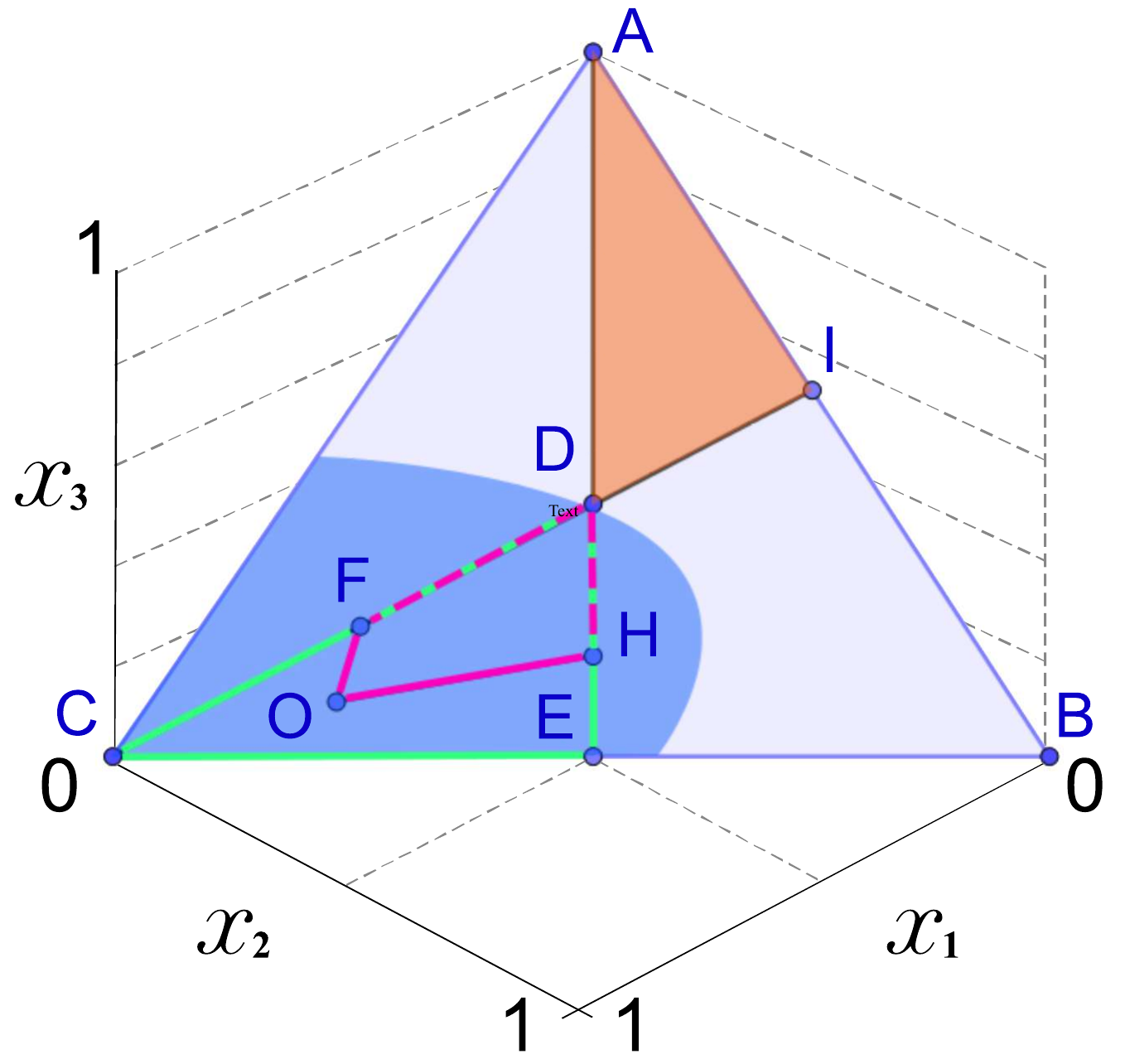}\\
 \end{minipage}
}
\hspace{4ex}
\subfloat[t][$\sigma^{2}=(2, 1, 16)^{\top}$]{
\begin{minipage}[t]{0.35\textwidth}\label{f2-d}
 \centering
  \includegraphics[width=\hsize]{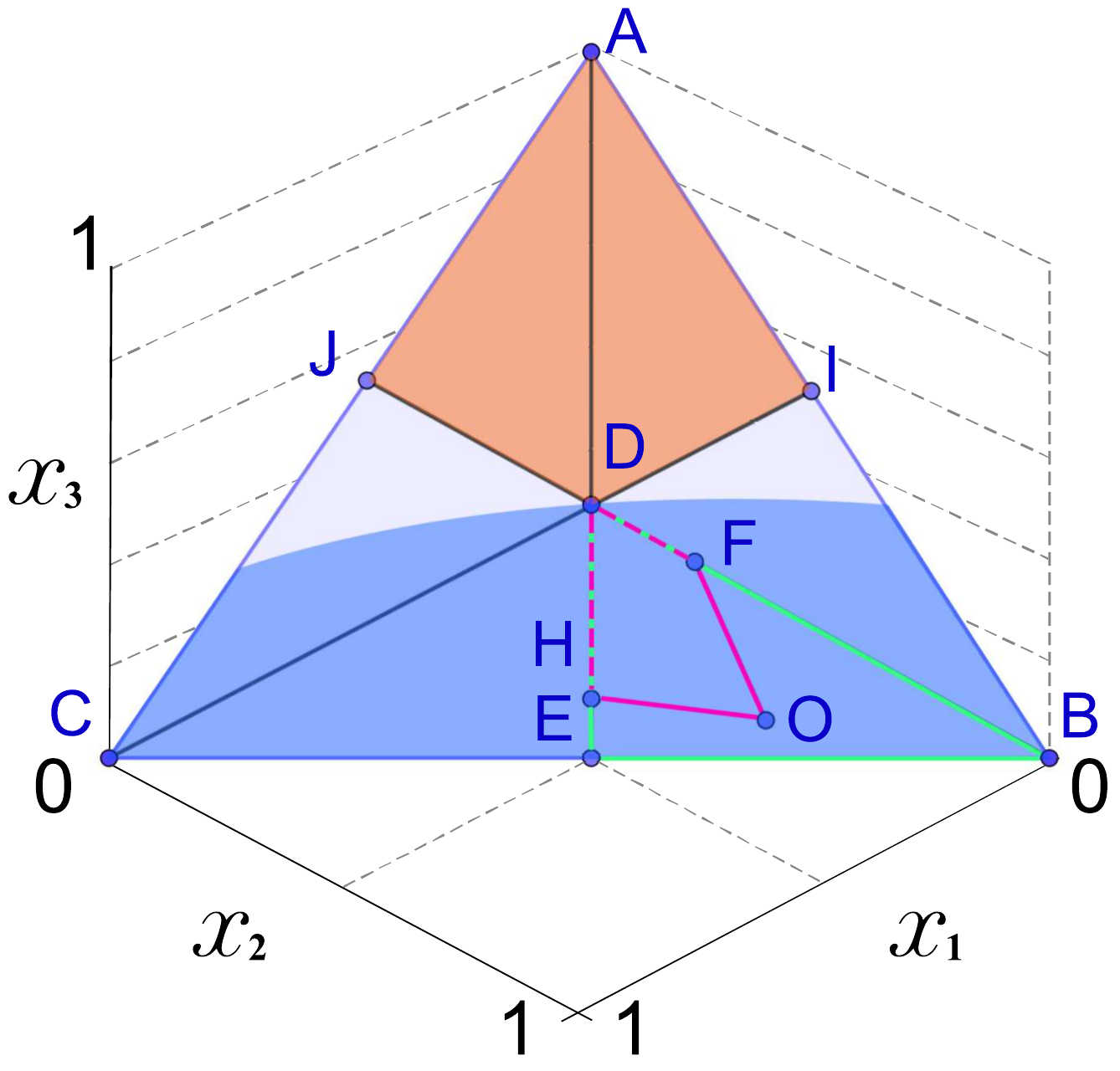}\\
 \end{minipage}
}
\caption{The improvement region, the undermining region and the consistency regions depicted in the $3$-simplex with the point $D=\mathbf{1}_{n}/3$. For each $\sigma^{2}$, blue area is the improvement region, point $O$ is the maximal consistency point, quadrilateral $DFOH$ (excluding point $D$) is the gap-consistency region and the undermining region includes gray and orange areas. The triangles $DCE$ and $DBE$ (excluding the points $D$) in \cref{f2-c,f2-d} are the ordering-consistency regions.}\label{fig2}
\end{figure}

\subsection{The effects of select crowd and hierarchical individuals}
For given individuals' variances $\sigma^{2}\notin\Span{\mathbf{1}_{n}}$, there always exist a gap-consistency region and a maximal consistency point of social power allocation such that the wisdom is improved or optimized. Yet we notice that on one hand, both gap-consistency and maximal consistency require that each individual has strictly positive social power, as discussed in \cref{r3}; on the other hand, whether the influence system is gap-consistent or maximally consistent depends quantitatively upon individuals' social power and variances, while criteria based on the ordering of social power allocations seem more explicit and more intuitive. This subsection first focuses on the effect of select crowd where not every individual is assigned strictly positive social power. Then, a hierarchy of individuals' accuracy is defined with which we can determine the improvement of wisdom by only examining the ordering of individuals' social power.

Note that there exist $\frac{n(n-1)}{2}$ hyperplanes $\{z\in\mathbb{R}^{n}\mid z_{i}=z_{j}\}$, by which the $n$-simplex $\Delta_{n}$ is partitioned into $n!$ hypertriangles, thereby the orderings of components of $z\in\Delta_{n}$ in different hypertriangles are different, see \cref{f2-d}. Let $\tau=(\tau_{1}, \dots,\tau_{n})$ be a permutation of $\tau^{0}=(1, \dots,n)$, $\mathcal{T}$ be the set of all permutations of $\tau^{0}$ and $\Delta^{\tau}_{n}=\setdef{z\in\Delta_{n}\setminus\{ \frac{\mathbf{1}_{n}}{n}\}}{z_{\tau_{i}}\geq z_{\tau_{i+1}}, i\in\until{n-1}}$ be the hypertriangle corresponding to $\tau$, then $\bigcup_{\tau\in\mathcal{T}}\Delta^{\tau}_{n}=\Delta_{n}\setminus\{ \frac{\mathbf{1}_{n}}{n}\}$. For $z\in\mathbb{R}^{n}$, we denote $z_{\tau}=(z_{\tau_{1}},\dots,z_{\tau_{n}})^{\top}$.
 
\begin{lemma}[The properties of the hypertriangle]\label{L4}
For the hypertriangle $\Delta^{\tau^{0}}_{n}=\setdef{z\in \Delta_{n}\setminus \{\frac{\mathbf{1}_{n}}{n}\}}{z_{i}\geq z_{i+1}, i\in \until{n-1}}$ associated with the identity permutation $\tau^{0}=(1, \dots,n)$, let $\mathcal{\hat{A}}_{m}=\setdef{z\in \Delta^{\tau^{0}}_{n}}{z_{m+1}=0}$ with $m<n$, then 
\begin{enumerate}
\item $\Delta^{\tau^{0}}_{n}\subset \mathcal{A}_{\sigma^{2}}$ if and only if  \label{L4-1}
\begin{equation}\label{eq2}
\frac{1}{j^2}\sum_{r=1}^{j}\sigma_{r}^{2}<\frac{1}{n^2}\sum_{i=1}^{n}\sigma_{i}^{2} \quad \text{for all} \quad j\in\until{n-1}; 
\end{equation}

\item $\mathcal{\hat{A}}_{m}\subset \mathcal{A}_{\sigma^{2}}$ if and only if~\eqref{eq2} holds for all $j\in\until{m}$. \label{L4-2}
\end{enumerate}
\end{lemma}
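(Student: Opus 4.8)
The plan is to exploit that the closed ordered hypertriangle $\overline{\Delta^{\tau^0}_n}=\setdef{z}{z_1\ge\cdots\ge z_n\ge 0,\ \mathbf{1}_n^\top z=1}$ is a simplex whose extreme points are the prefix-averages $v^{(j)}=\tfrac1j(\mathbf{1}_j^\top,\mathbf{0}^\top)^\top$ for $j\in\until{n}$, so that every point in it admits a convex representation $z=\sum_{j=1}^n\lambda_j v^{(j)}$ with $\lambda\in\Delta_n$. A direct computation gives $\mathcal E_{\sigma^2}(v^{(j)})=\tfrac1{j^2}\sum_{r=1}^j\sigma_r^2$, and the vertex $v^{(n)}=\mathbf{1}_n/n$ realizes the reference value $\mathcal E_{\sigma^2}(\mathbf{1}_n/n)=\tfrac1{n^2}\sum_i\sigma_i^2$ that defines $\mathcal A_{\sigma^2}$. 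Thus \eqref{eq2} at index $j$ says precisely $\mathcal E_{\sigma^2}(v^{(j)})<\mathcal E_{\sigma^2}(v^{(n)})$, i.e. that $v^{(j)}$ lies in the improvement region. Since $\mathcal E_{\sigma^2}$ is convex on $\Delta_n$ by \cref{L3}, its value at any $z$ is bounded by the convex combination of its vertex values, which reduces both statements to comparisons among the numbers $\mathcal E_{\sigma^2}(v^{(j)})$.

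For \ref{L4-1}, the ``only if'' direction is immediate: every $v^{(j)}$ with $j\le n-1$ actually belongs to $\Delta^{\tau^0}_n$ (it is ordered, lies in $\Delta_n$, and differs from $\mathbf{1}_n/n$), so $\Delta^{\tau^0}_n\subset\mathcal A_{\sigma^2}$ forces $\mathcal E_{\sigma^2}(v^{(j)})<\mathcal E_{\sigma^2}(\mathbf{1}_n/n)$, which is \eqref{eq2}. For ``if'', take $z=\sum_j\lambda_j v^{(j)}\in\Delta^{\tau^0}_n$; since $z\ne v^{(n)}$ we have $\sum_{j<n}\lambda_j>0$, so convexity together with \eqref{eq2} yields $\mathcal E_{\sigma^2}(z)\le\sum_j\lambda_j\mathcal E_{\sigma^2}(v^{(j)})<\mathcal E_{\sigma^2}(\mathbf{1}_n/n)$, i.e. $z\in\mathcal A_{\sigma^2}$.

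For \ref{L4-2}, I would observe that $\overline{\hat{\mathcal A}_m}$ is the sub-simplex spanned by $v^{(1)},\dots,v^{(m)}$ (the coordinates $m+1,\dots,n$ frozen at $0$), and that $\hat{\mathcal A}_m$ is exactly the part of this sub-simplex carrying positive weight on $v^{(m)}$, since $z_m=\lambda_m/m$. The ``if'' direction is again the convexity estimate $\mathcal E_{\sigma^2}(z)\le\sum_{j=1}^m\lambda_j\mathcal E_{\sigma^2}(v^{(j)})\le\max_{j\le m}\mathcal E_{\sigma^2}(v^{(j)})<\mathcal E_{\sigma^2}(\mathbf{1}_n/n)$ for every $z\in\hat{\mathcal A}_m$. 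For ``only if'', the vertex $v^{(m)}$ itself lies in $\hat{\mathcal A}_m$, so containment immediately gives \eqref{eq2} at $j=m$; for $j<m$ the vertex $v^{(j)}$ sits on the excluded face $\{z_m=0\}$, so I would approach it along the segment $(1-t)v^{(j)}+t\,v^{(m)}\in\hat{\mathcal A}_m$ as $t\downarrow 0$ and pass to the limit.

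The main obstacle is precisely this final limit. Approaching $v^{(j)}$ ($j<m$) through interior points only delivers the non-strict bound $\mathcal E_{\sigma^2}(v^{(j)})\le\mathcal E_{\sigma^2}(\mathbf{1}_n/n)$, because $\mathcal A_{\sigma^2}$ is an open strict-inequality region whose boundary is reachable from inside; moreover, since $\mathcal E_{\sigma^2}$ is strictly convex its maximum over $\overline{\hat{\mathcal A}_m}$ is attained only at vertices, so a vertex value equal to the reference at an excluded prefix-vertex $v^{(j)}$, $j<m$, remains consistent with $\hat{\mathcal A}_m\subset\mathcal A_{\sigma^2}$. The delicate step is therefore to rule out such boundary equalities under the containment hypothesis; I would attempt this by exhibiting an explicit perturbation inside $\hat{\mathcal A}_m$ whose energy would exceed the reference value, and if no such perturbation exists the cleanest reading is to keep \eqref{eq2} strict at $j=m$ and non-strict for $j<m$.
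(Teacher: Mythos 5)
Your treatment of statement~\ref{L4-1} and of the ``if'' half of statement~\ref{L4-2} is essentially the paper's own proof (\cref{AP3}): the paper uses exactly your vertex decomposition, writing $z=\sum_{i}q_{i}\tilde{\mathbf{e}}_{i}$ with $\tilde{\mathbf{e}}_{i}=\frac{1}{i}\sum_{r=1}^{i}\mathbf{e}_{r}$ and $q_{i}=i(z_{i}-z_{i+1})$, and concludes by convexity, while its necessity argument for~\ref{L4-1} plugs in the vertices $\tilde{\mathbf{e}}_{j}\in\Delta^{\tau^{0}}_{n}$, $j\leq n-1$, just as you do. Your bookkeeping is in fact slightly cleaner: the paper invokes strict convexity to write $\mathcal{E}_{\sigma^{2}}(z)<\sum_{i}q_{i}\mathcal{E}_{\sigma^{2}}(\tilde{\mathbf{e}}_{i})$, which degenerates to an equality when $z$ is itself a vertex $\tilde{\mathbf{e}}_{j}$, $j<n$, whereas you place the strictness where it belongs, in the comparison of the vertex values with $\mathcal{E}_{\sigma^{2}}(\mathbf{1}_{n}/n)$.

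The obstacle you isolate in the ``only if'' half of statement~\ref{L4-2} is genuine, and it is a defect of the lemma rather than of your argument. The paper omits this direction (``the proof is similar''), but the analogue of the necessity argument of~\ref{L4-1} breaks down exactly where you say: for $j<m$ the vertex $\tilde{\mathbf{e}}_{j}$ lies on the excluded face $\{z_{m}=0\}$ of $\hat{\mathcal{A}}_{m}$, and interior approximation only delivers a non-strict bound. Indeed the stated equivalence is false in the boundary case you suspected. Take $n=3$, $m=2$, $\sigma^{2}=(1,2,6)^{\top}$: then $\frac{1}{n^{2}}\sum_{i}\sigma_{i}^{2}=1$, so \eqref{eq2} fails at $j=1$ (it holds with equality, $\sigma_{1}^{2}=1$) but holds at $j=2$ ($\tfrac{3}{4}<1$); yet every $z\in\hat{\mathcal{A}}_{2}$ has the form $(1-t,t,0)$ with $t\in(0,\tfrac{1}{2}]$ and $\mathcal{E}_{\sigma^{2}}(z)=(1-t)^{2}+2t^{2}=1-2t+3t^{2}<1$, so $\hat{\mathcal{A}}_{2}\subset\mathcal{A}_{\sigma^{2}}$ although \eqref{eq2} fails at $j=1$. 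Your proposed repair is the correct characterization: $\hat{\mathcal{A}}_{m}\subset\mathcal{A}_{\sigma^{2}}$ if and only if \eqref{eq2} holds strictly at $j=m$ and holds with $\leq$ for all $j<m$; sufficiency follows by your convexity estimate after splitting off the case $z=\tilde{\mathbf{e}}_{m}$, and necessity follows from $\tilde{\mathbf{e}}_{m}\in\hat{\mathcal{A}}_{m}$ together with your limiting argument along the segments $(1-t)\tilde{\mathbf{e}}_{j}+t\,\tilde{\mathbf{e}}_{m}$. The error is harmless downstream: \cref{T3} uses only the sufficiency half of statement~\ref{L4-2}, and \cref{T4,T6} use only statement~\ref{L4-1}.
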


\Cref{L4} is proved in~\cref{AP3}. In practice, social influence systems in which not everyone has strictly positive social power are ubiquitous, such as administrative systems or bureaucracies. The next theorem suggests that in these systems, if more powerful people have smaller upper bounds of variances, in other words, are more expert, the wisdom is improved, regardless of the expertise of people without power.

\begin{theorem}[Allocating social power to select individuals improves wisdom]\label{T3}
For social influence process~\eqref{eq1} with social power allocation $x\in\Delta_{n}$ and individuals' variances $\sigma^{2}\in\mathbb{R}^{n}$, let $\tau$ be a permutation of $\tau^{0}=(1,\dots,n)$. If there exist $m<n$ such that 
\begin{equation}\label{e3}
\sigma_{\tau_{j}}^{2}<\frac{2j-1}{n^2}\sum_{i=1}^{n}\sigma_{i}^{2} \quad \text{for all} \quad j\in\until{m},
\end{equation}
then system~\eqref{eq1} with any social power allocation $x$ satisfying $x_{\tau}\in\mathcal{\hat{A}}_{m}=\setdef{z\in \Delta^{\tau^{0}}_{n}}{z_{m+1}=0}$ asymptotically improves the wisdom.
\end{theorem}

\begin{proof}
Let $P$ be the permutation matrix associated with $\tau$, then $x_{\tau}=Px$, $\sigma_{\tau}^{2}=P\sigma^{2}$, and $x\in\mathcal{A}_{\sigma^{2}}$ if and only if $x_{\tau}\in\mathcal{A}_{\sigma_{\tau}^{2}}$.
Next, we prove that 
\begin{equation}\label{e4}
\frac{1}{j^2}\sum_{r=1}^{j}\sigma_{\tau_{r}}^{2}<\frac{1}{n^2}\sum_{i=1}^{n}\sigma_{i}^{2} \quad \text{for all} \quad j\in\until{m}.
\end{equation}
Using induction, for $j=1$, by~\eqref{e3} we have $\sigma_{\tau_{1}}^{2}<\frac{1}{n^2}\sum_{i=1}^{n}\sigma_{i}^{2}$. Suppose that 
\begin{align*}
\frac{1}{(h-1)^2}\sum_{r=1}^{h-1}\sigma_{\tau_{r}}^{2}<\frac{1}{n^2}\sum_{i=1}^{n}\sigma_{i}^{2} \quad \text{for all} \quad h-1\in\until{m-1}.
\end{align*}
Then, with $\sigma_{\tau_{h}}^{2}<\frac{2h-1}{n^2}\sum_{i=1}^{n}\sigma_{i}^{2}$ we obtain
\begin{align*}
\begin{split}
\frac{1}{h^2}\sum_{r=1}^{h}\sigma_{\tau_{r}}^{2}=&\frac{1}{h^2}\sigma_{\tau_{h}}^{2}+\frac{1}{h^2}\sum_{r=1}^{h-1}\sigma_{\tau_{r}}^{2}
<\frac{1}{h^2}\sigma_{\tau_{h}}^{2}+\frac{(h-1)^2}{h^{2}n^{2}}\sum_{i=1}^{n}\sigma_{i}^{2}
<\frac{1}{n^2}\sum_{i=1}^{n}\sigma_{i}^{2}.
\end{split}
\end{align*}
Therefore, \eqref{e4} holds. By \cref{L4}~\ref{L4-2}, we have $x_{\tau}\in\mathcal{A}_{\sigma_{\tau}^{2}}$, i.e., $x\in\mathcal{A}_{\sigma^{2}}$, which means social influence improves the wisdom asymptotically.
\end{proof}

Note that $\frac{2j-1}{n^2}\sum_{i=1}^{n}\sigma_{i}^{2}=\frac{2j-1}{n}\Ave(\sigma^{2})$. \Cref{T3} shows that selecting a subgroup of individuals whose variances are upper bounded, respectively, by specific fractions of individuals' average variance, and allocating social power to the select individuals according to the ordering of the upper bounds improves the wisdom. This is similar to the select-crowd strategy proposed in~\cite{AEM-JBS-RPL:14}. Moreover, note that $\mathcal{\hat{A}}_{m-1}\subseteq\mathcal{\hat{A}}_{m}$, which implies that if~\eqref{e3} holds for $m$, the select-crowd strategy works for all subgroup size no larger than $m$. Furthermore, the condition in \cref{T3} does not necessarily imply ordering-consistency. For example, consider social power allocation $x=(x_{1}, 0, 0, x_{4})^{\top}$ where $x_{1}\geq x_{4}>0$ and $x_{1}+ x_{4}=1$, then $\sigma^{2}=(3,24,20,2)^{\top}$
satisfies condition~\eqref{e3} with $m=2$ and $\tau=(1, 4, 2, 3)$. 

Motivated by \cref{L4}, we define the following hierarchy of individuals. 
\begin{definition}[$\tau$-hierarchy of individuals]
A group of individuals with variances $\sigma^{2}_{1},\dots,\sigma^{2}_{n}$ is said to admit a $\tau$-hierarchy if there exists a permutation $
\tau$ of $(1,\dots,n)$ such that
\begin{equation}\label{e3.1}
\frac{1}{i^2}\sum_{r=1}^{i}\sigma_{\tau_{r}}^{2}<\frac{1}{n^2}\sum_{j=1}^{n}\sigma_{j}^{2} \quad \text{for all} \quad i\in\until{n-1}.
\end{equation}
\end{definition}

Note that $\frac{1}{i^2}\sum_{r=1}^{i}\sigma_{r}^{2}$ is the variance of the initial collective estimate of the first $i$ individuals. Therefore, a group of individuals admit a $\tau$-hierarchy means that there exists a permutation $\tau$ such that after the permutation, the initial collective estimate of any top $m$ $(m<n)$ individuals outperforms the initial collective estimate of the entire group. Especially, by \cref{L4} and \cref{T3}, this hierarchical structure implies that after a permutation, selecting any top $m$ individuals and allocating social power decreasingly to them improves the wisdom. Based on the hierarchical structure, the next theorem proposes conditions with which we are able to determine whether social influence improves the wisdom by only checking the ordering of social power allocations. Without loss of generality, individuals are labelled in ascending order according to their variances. For permutation $\tau=(\tau_{1},\dots,\tau_{n})$, we say $(\tau_{p},\tau_{q})$ is an inversion if $\tau_{p}>\tau_{q}$ and $p<q$. For permutations $\tau$ and $\tau^{\prime}$, we write $\tau^{\prime}\prec\tau$ if $\tau^{\prime}$ can be obtained from $\tau$ by swapping inversions one by one, and $\tau^{\prime}\preceq\tau$ if $\tau^{\prime}\prec\tau$ or $\tau^{\prime}=\tau$. 

\begin{algorithm}
 \caption{\textbf{Modified permutation generation (MPG) algorithm}}
 \label{Alg1}
\begin{algorithmic}[1]
\item[\algorithmicrequire] $\sigma^{2}$ with $\sigma_{i}^{2}\leq\sigma_{i+1}^{2}$ for all $i\in\until{n-1}$
\item[\algorithmicensure] $\mathcal{T}_{\sigma^{2}}$ such that $\Delta^{\tau}_{n}\subset \mathcal{A}_{\sigma^{2}}$ for all $\tau\in\mathcal{T}_{\sigma^{2}}$ and $\Delta^{\tau}_{n}\not\subset \mathcal{A}_{\sigma^{2}}$ for all $\tau\in\mathcal{T}\setminus\mathcal{T}_{\sigma^{2}}$ where $\mathcal{T}$ is the set of all permutations of $(1,\dots,n)$
\STATE initialize $\mathcal{T}_{\sigma^{2}}:=\emptyset$; $s_{0}:=0$, $u_{0}:=\sigma^{2}_{1}$; $s_{i}:=\sum_{r=1}^{i}\sigma_{r}^{2}$ and $u_{i}:=\frac{i^2}{n^2}s_{n}-s_{i-1}$ for all $i\in\until{n}$
\STATE find $f_{i}=\max\setdef{j\in\{0,\dots,i-1\}}{\sigma^{2}_{i}\geq u_{j}}$ for all $i\in\until{n}$
\STATE 
$j:=0$
\FOR{$i$ from $1$ to $n$}
 \IF{$f_{i}=i-1$}
\STATE
$j:=j+1$\\
$h_{j}:=i$
\ENDIF 
\ENDFOR
\STATE 
$d:=j$, $h_{d+1}:=n+1$
\FOR{$i$ from $1$ to $d$}
\STATE generate all permutations of segment $(h_{i},\dots,h_{i+1}-1)$ using permutation generation algorithms (e.g., Heap's algorithm provided in~\cite{RS:77}) 
\ENDFOR
\STATE concatenate permutations of all segments, obtain $N=\prod_{i=1}^{d}(h_{i+1}-h_{i})!$ different permutations of $(1,\dots,n)$, denoted by $\tau^{1},\dots,\tau^{N}$
\FOR{$i$ from $1$ to $N$}
\IF{$\sigma^{2}_{\tau^{i}}$ satisfies~\cref{e3.1}}
\STATE$\mathcal{T}_{\sigma^{2}}:=\mathcal{T}_{\sigma^{2}}\cup\{\tau^{i}\}$
\ENDIF
\ENDFOR
\RETURN $\mathcal{T}_{\sigma^{2}}$
\end{algorithmic}
\end{algorithm}

\begin{theorem}[Improving the wisdom of hierarchical individuals]\label{T4}
For social influence process~\eqref{eq1} with social power allocation $x\in\Delta_{n}$ and individuals' variances $\sigma_{i}^{2}\leq\sigma_{i+1}^{2}$ for all $i\in\until{n-1}$, suppose that $\mathcal{T}$ is the set of all permutations of $\tau^{0}=(1,\dots,n)$ and $\Delta^{\tau}_{n}=\setdef{z\in\Delta_{n}\setminus\{ \frac{\mathbf{1}_{n}}{n}\}}{z_{\tau_{i}}\geq z_{\tau_{i+1}}, i\in\until{n-1}}$. Then,
\begin{enumerate}
\item ordering-consistency is sufficient for system~\eqref{eq1} to asymptotically improve the wisdom if and only if individuals admit the $\tau^{0}$-hierarchy; \label{T4-1}
\item if individuals admit a $\tau$-hierarchy, system~\eqref{eq1} with all social power allocations $x\in\bigcup_{\tau^{\prime}\in\setdef{\bar{\tau}\in\mathcal{T}}{\bar{\tau}\preceq\tau}}\Delta^{\tau^{\prime}}_{n}$ asymptotically improves the wisdom.\label{T4-2}
\item For~\cref{Alg1} and its output $\mathcal{T}_{\sigma^2}$,\label{T4-3}
\begin{enumerate}
\item system~\eqref{eq1} with all social power allocations $x\in\bigcup_{\tau\in\mathcal{T}_{\sigma^{2}}}\Delta^{\tau}_{n}$ asymptotically improves the wisdom;\label{T4-3.1}
\item for all $\tau\in\mathcal{T}\setminus\mathcal{T}_{\sigma^{2}}$, there exists $x\in\Delta^{\tau}_{n}$ such that system~\eqref{eq1} with social power allocation $x$ does not asymptotically improve the wisdom;\label{T4-3.2}
\item the running time of \cref{Alg1} is between $\mathcal{O}(n^2)$ and the running time of the enumeration method, which is at least $\mathcal{O}(n\times n!)$.\label{T4-3.3}
\end{enumerate}
\end{enumerate} 
\end{theorem}

\begin{proof}
Regarding~\ref{T4-1}, ordering-consistency is sufficient to improve the wisdom means that system~\eqref{eq1} improves the wisdom if it is ordering-consistent. Note that system~\eqref{eq1} is ordering-consistent if and only if $x\in\Delta^{\tau^{0}}_{n}$. By \cref{L4}~\ref{L4-1}, $\Delta^{\tau^{0}}_{n}\subset \mathcal{A}_{\sigma^{2}}$ if and only if~\eqref{eq2} holds, i.e., individuals admit the $\tau^{0}$-hierarchy. 

Regarding~\ref{T4-2}, let $P$ be the permutation matrix corresponding to $\tau$, then $Px\in\Delta_{n}^{\tau^{0}}$ and $P\sigma^{2}=\sigma^{2}_{\tau}$. Since~\eqref{e3.1} holds, \cref{L4}~\ref{L4-1} suggests that $\Delta_{n}^{\tau^{0}}\subset\mathcal{A}_{\sigma^{2}_{\tau}}$, which is equivalent to $\Delta_{n}^{\tau}\subset\mathcal{A}_{\sigma^{2}}$. Assume $\tau_{p}>\tau_{q}$ for $p<q$, let $\tilde{\tau}_{p}=\tau_{q}$, $\tilde{\tau}_{q}=\tau_{p}$, and $\tilde{\tau}_{l}=\tau_{l}$ for $l\in\until{n}\setminus\{p,q\}$, then $\tilde{\tau}\prec\tau$. Since $\tau_{l}=\tilde{\tau}_{l}$ for all $l\neq p,q$, we have $\sum_{r=1}^{i}\sigma_{\tilde{\tau}_{r}}^{2}=\sum_{r=1}^{i}\sigma_{\tau_{r}}^{2}$ for all $i\in\until{p-1}\cup\{q,\dots,n-1\}$. For $i\in\{p,\dots,q-1\}$, 
\begin{align*}
\sum_{r=1}^{i}\sigma_{\tilde{\tau}_{r}}^{2}=\sum_{r=1}^{p-1}\sigma_{\tau_{r}}^{2}+\sigma_{\tilde{\tau}_{p}}^{2}+\sum_{r=p+1}^{i}\sigma_{\tau_{r}}^{2}\leq\sum_{r=1}^{i}\sigma_{\tau_{r}}^{2}<\frac{i^2}{n^2}\sum_{j=1}^{n}\sigma_{j}^{2}
\end{align*}
due to $\sigma_{\tilde{\tau}_{p}}^{2}=\sigma_{\tau_{q}}^{2}\leq\sigma_{\tau_{p}}^{2}$. That is, \eqref{e3.1} holds for $\tilde{\tau}$, which implies $\Delta_{n}^{\tilde{\tau}}\subset\mathcal{A}_{\sigma^{2}}$ and consequently $\Delta_{n}^{\tilde{\tau}}\subset\mathcal{A}_{\sigma^{2}}$ for all $\tilde{\tau}\preceq\tau$. In conclusion, system~\eqref{eq1} asymptotically improves the wisdom with all social power allocations $x\in\bigcup_{\tau^{\prime}\in\setdef{\bar{\tau}\in\mathcal{T}}{\bar{\tau}\preceq\tau}}\Delta^{\tau^{\prime}}_{n}$. 

Regarding~\ref{T4-3.1}, by steps $11,12,13$ of \cref{Alg1}, $\sigma^{2}_{\tau}$ satisfies~\eqref{e3.1} for any $\tau\in\mathcal{T}_{\sigma^{2}}$. Statement~\ref{T4-2} suggests system~\eqref{eq1} asymptotically improves the wisdom with all social power allocations $x\in\Delta_{n}^{\tau}$ for all $\tau\in\mathcal{T}_{\sigma^{2}}$. Regarding~\ref{T4-3.2}, suppose $\tau\in\mathcal{T}\setminus\mathcal{T}_{\sigma^{2}}$. There are two situations. One is that $\tau$ is obtained from step $10$, i.e., $\tau\in\{\tau^{1},\dots,\tau^{N}\}$, but is not added to $\mathcal{T}_{\sigma^{2}}$ in the following steps; the other is that $\tau\notin\{\tau^{1},\dots,\tau^{N}\}$. The first situation means that $\sigma^{2}_{\tau}$ does not satisfy~\eqref{e3.1}. For the second situation, note that from steps $3$ to $7$ the algorithm divides $(1,\dots,n)$ into $d$ segments, that is, $(h_{i},\dots,h_{i+1}-1)$ for $i\in\until{d}$, and $\{\tau^{1},\dots,\tau^{N}\}$ contains all the permutations of $(1,\dots,n)$ obtained by concatenating permutations of $d$ segments. Therefore, $\tau\notin\{\tau^{1},\dots,\tau^{N}\}$ implies that there exists $j^{*}\in\until{n}$ and $i^{*}\in\until{d}$ such that $j^{*}\in\{h_{i^{*}},\dots,h_{i^{*}+1}-1\}$ and $\tau_{j^{*}}\notin\{h_{i^{*}},\dots,h_{i^{*}+1}-1\}$. Without loss of generality, let $n\geq\tau_{j^{*}}\geq h_{i^{*}+1}>j^{*}$. Since $f_{h_{i^{*}+1}}=h_{i^{*}+1}-1$, we have $\sigma^{2}_{h_{i^{*}+1}}\geq u_{h_{i^{*}+1}-1}$, i.e., $s_{h_{i^{*}+1}-2}+\sigma^{2}_{h_{i^{*}+1}}\geq \frac{(h_{i^{*}+1}-1)^2}{n^2}s_{n}$.
Hence,
\begin{align*}
\sum_{r=1}^{h_{i^{*}+1}-1}\sigma^{2}_{\tau_{r}}=\sum_{\substack{r=1 \\ r\neq j^{*}}}^{h_{i^{*}+1}-1}\sigma^{2}_{\tau_{r}}+\sigma^{2}_{\tau_{j^{*}}}\geq s_{h_{i^{*}+1}-2}+\sigma^{2}_{h_{i^{*}+1}}\geq \frac{(h_{i^{*}+1}-1)^2}{n^2}s_{n},
\end{align*}
where the second last inequality is implied by $\sigma_{i}^{2}\leq\sigma_{i+1}^{2}$ for all $i\in\until{n-1}$. Therefore, $\sigma^{2}_{\tau}$ does not satisfy~\eqref{e3.1} for $i=h_{i^{*}+1}-1$. In conclusion, $\sigma^{2}_{\tau}$ does not satisfy~\eqref{e3.1} for any $\tau\in\mathcal{T}\setminus\mathcal{T}_{\sigma^{2}}$. By \cref{L4}~\ref{L4-1} and the proof of statement~\ref{T4-2}, $\Delta_{n}^{\tau}\not\subset\mathcal{A}_{\sigma^{2}}$ for all $\tau\in\mathcal{T}\setminus\mathcal{T}_{\sigma^{2}}$, which completes the proof. Regarding~\ref{T4-3.3}, by~\cite{RS:77}, the running time of permutation generation algorithm employed in step $9$ is at least $\mathcal{O}(m!)$ to generate all permutations of $m$ numbers. The time complexity of step $2$ is $\mathcal{O}(n^2)$. Steps $3$ to $7$ need to execute elementary operations $3n$ times. Steps $8$ and $9$ need to call the permutation generation algorithm $d$ times with total running time at least $\sum_{i=1}^{d}(h_{i+1}-h_{i})!$. Step $10$ needs to concatenate $\sum_{i=1}^{d}(h_{i+1}-h_{i})!$ segments and obtains $N=\prod_{i=1}^{d}(h_{i+1}-h_{i})!$ permutations of $(1,\dots,n)$. Hence, the running time of step $10$ is $N$. Moreover, the complexity of steps $11$ to $13$ is $\mathcal{O}(nN)$. In the case that $d=n$, i.e., $f_{i}=i-1$ for all $i\in\until{n}$, we have $\sum_{i=1}^{d}(h_{i+1}-h_{i})!=n$ and $N=1$. Thus, the running time of \cref{Alg1} is $\mathcal{O}(n^2)$. In the case that $d=1$, i.e., $f_{i}=i-1$ only for $i=1$, we have $\sum_{i=1}^{d}(h_{i+1}-h_{i})!=N=n!$ and \cref{Alg1} generates all $n!$ permutations of $(\sigma^{2}_{1},\dots,\sigma^{2}_{n})$ and runs steps $12$ and $13$ for each permutation, which is exactly the enumeration method with running time at least $\mathcal{O}(n\times n!)$. In conclusion, the running time of \cref{Alg1} is between $\mathcal{O}(n^2)$ and the running time of the enumeration method.
\end{proof}

\begin{remark}
The existence of any permutation of $\sigma^{2}$ satisfying~\eqref{e3.1} is equivalent to the existence of any solution of the following $0$-$1$ integer programming without optimization:
\begin{equation}\label{e3.3}
\begin{aligned}
 &z\in\mathbb{R}^{n^2}\\
\textup{s.t.} \ \ \ &Az<b\\
&A^{\prime}z=\mathbf{1}_{2n},\\
&z_{i}\in\{0,1\} \ \textup{for all} \ i\in\until{n^{2}},
\end{aligned}
\end{equation}
where $z=[P_{1}  \  \dots \ P_{n}]^{\top}$ with $P_{i}\in\mathbb{R}^{1\times n}$ being the $i$-th row of the permutation matrix corresponding to $\tau$, $A=Q\otimes(\sigma^{2})^{\top}\in\mathbb{R}^{n-1\times n^{2}}$ with $Q\in\mathbb{R}^{n-1\times n}$, $Q_{ij}=1$ for $i\geq j$ and $Q_{ij}=0$ otherwise, $b\in\mathbb{R}^{n-1}$ with $b_{i}=\frac{i^2}{n^2}\sum_{r=1}^{n}\sigma_{r}^{2}$, $A^{\prime}=[I_{n}\otimes \mathbf{1}_{n} \ \ \mathbf{1}_{n}\otimes I_{n}]^{\top}\in\mathbb{R}^{2n\times n^{2}}$. Karp~\cite{RMK:72} proved that problem~\eqref{e3.3} is NP-complete. Equivalently, \cref{Alg1} finds all the feasible solutions of problem~\eqref{e3.3}. There may be other algorithms which can solve problem~\eqref{e3.3}, for example, the intlinprog function of MATLAB. However, the intlinprog function only provides one feasible solution. 
\end{remark}

\Cref{T4} provides several results based on the idea of partitioning the $n$-simplex into hypertriangles such that social power allocations in different hypertriangles are ordered differently. \Cref{T4}~\ref{T4-1} shows that ordering-consistency is sufficient for improving the wisdom if and only if individuals admit a $\tau^{0}$-hierarchy. \Cref{T4}~\ref{T4-2} suggests that if individuals with ascending variances admit a $\tau$-hierarchy with $\tau\neq\tau^{0}$, social influence with all social power allocations in certain orderings improves the wisdom; and there exists more than one such ordering. To address this issue, we propose \cref{Alg1}, which, for given individuals' variances, outputs all the orderings of social power allocations that improve the wisdom. In \cref{f2-c,f2-d}, triangles $DCE$ and $DBE$ (excluding the points $D$) are, respectively, the ordering-consistent regions for the given variances. They are both contained inside the improvement regions since individuals with $\sigma^{2}=(1, 4, 9)^{\top}$ and $\sigma^{2}=(1, 2, 16)^{\top}$ both admit the $\tau^{0}$-hierarchy. Moreover, in \cref{f2-d}, triangle $DCE$ (excluding the point $D$) is also contained in the improvement region since individuals with $\sigma^{2}=(1, 2, 16)^{\top}$ also admit the $(2,1,3)$-hierarchy. 

\section{The role of social power in undermining the wisdom}\label{S4}
In this section, we investigate the question of when social influence asymptotically undermines the wisdom of crowds. For a permutation $\tau$ of $(1,\dots,n)$, denote by $\tau^{-1}=(\tau_{n},\dots,\tau_{1})$ its inverse. Given $\sigma^{2}\in\mathbb{R}^{n}$, let $\mathcal{\bar{A}}_{\sigma^{2}}=\setdef{z\in\Delta_{n}}{\mathcal{E}_{\sigma^{2}}(z)>\mathcal{E}_{\sigma^{2}}(\frac{\mathbf{1}_{n}}{n})}$ be the undermining region. The following theorem shows that if the ordering of the influence system's social power allocations is in reverse to the ordering of individuals' accuracy, social influence undermines the wisdom. 
\begin{theorem}[Ordering-reverse influence system undermines the wisdom]\label{T5}
For social influence process~\eqref{eq1} with social power allocation $x\in\Delta_{n}$ and individuals' variances $\sigma^{2}\in\mathbb{R}^{n}$, let $\tau$ be a permutation such that $\sigma_{\tau_{i}}^{2}\leq\sigma_{\tau_{i+1}}^{2}$ for all $i\in\until{n-1}$, $\Delta^{\tau}_{n}=\setdef{z\in\Delta_{n}\setminus\{ \frac{\mathbf{1}_{n}}{n}\}}{z_{\tau_{i}}\geq z_{\tau_{i+1}}, i\in\until{n-1}}$. System~\eqref{eq1} with all social power allocations $x\in\Delta_{n}^{\tau^{-1}}$ asymptotically undermines the wisdom.
\end{theorem}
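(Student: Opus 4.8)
The plan is to reduce the statement to a single expansion-plus-rearrangement estimate. Writing out \cref{D1}, undermining means $\mathcal{E}_{\sigma^{2}}(x)=\sum_{i}x_{i}^{2}\sigma_{i}^{2}>\frac{1}{n^{2}}\sum_{i}\sigma_{i}^{2}=\mathcal{E}_{\sigma^{2}}(\mathbf{1}_{n}/n)$, so the goal is exactly $\Delta^{\tau^{-1}}_{n}\subset\mathcal{\bar{A}}_{\sigma^{2}}$. First I would strip off the permutation. Let $P$ be the permutation matrix with $(Py)_{i}=y_{\tau_{i}}$, and set $y=Px$, $\tilde{\sigma}^{2}=P\sigma^{2}$. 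Since $P$ fixes the simplex and its barycenter, $\mathcal{E}_{\sigma^{2}}(x)=\mathcal{E}_{\tilde{\sigma}^{2}}(y)$ and $\mathcal{E}_{\sigma^{2}}(\mathbf{1}_{n}/n)=\mathcal{E}_{\tilde{\sigma}^{2}}(\mathbf{1}_{n}/n)$, while the ordering $z_{\tau_{n}}\geq\dots\geq z_{\tau_{1}}$ defining $\Delta^{\tau^{-1}}_{n}$ becomes $y_{1}\leq\dots\leq y_{n}$, and $\tilde{\sigma}^{2}$ is now sorted ascendingly, $\tilde{\sigma}^{2}_{1}\leq\dots\leq\tilde{\sigma}^{2}_{n}$. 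Thus it suffices to prove: for ascending variances and any ascending $y\in\Delta_{n}$ with $y\neq\mathbf{1}_{n}/n$, one has $\mathcal{E}_{\tilde{\sigma}^{2}}(y)>\mathcal{E}_{\tilde{\sigma}^{2}}(\mathbf{1}_{n}/n)$.

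Next I would expand around the barycenter. Writing $d=y-\mathbf{1}_{n}/n$, so that $\mathbf{1}_{n}^{\top}d=0$ and $d$ inherits the ascending order $d_{1}\leq\dots\leq d_{n}$, a direct computation gives
\begin{equation*}
\mathcal{E}_{\tilde{\sigma}^{2}}(y)=\mathcal{E}_{\tilde{\sigma}^{2}}(\mathbf{1}_{n}/n)+\frac{2}{n}\sum_{i=1}^{n}\tilde{\sigma}_{i}^{2}d_{i}+\sum_{i=1}^{n}\tilde{\sigma}_{i}^{2}d_{i}^{2}.
\end{equation*}
The quadratic term $\sum_{i}\tilde{\sigma}_{i}^{2}d_{i}^{2}$ is strictly positive whenever $d\neq 0$, i.e.\ whenever $y\neq\mathbf{1}_{n}/n$, because every variance is positive. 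Hence everything reduces to showing that the cross term $\sum_{i}\tilde{\sigma}_{i}^{2}d_{i}$ is nonnegative.

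I expect this cross term to be the crux, and it is precisely a Chebyshev/rearrangement statement. Since $\tilde{\sigma}^{2}$ and $d$ are like-ordered and $\sum_{i}d_{i}=0$, the identity
\begin{equation*}
n\sum_{i=1}^{n}\tilde{\sigma}_{i}^{2}d_{i}=\sum_{i<j}(\tilde{\sigma}_{i}^{2}-\tilde{\sigma}_{j}^{2})(d_{i}-d_{j})
\end{equation*}
exhibits $n\sum_{i}\tilde{\sigma}_{i}^{2}d_{i}$ as a sum of products of equally signed factors (for $i<j$ both $\tilde{\sigma}_{i}^{2}-\tilde{\sigma}_{j}^{2}\leq 0$ and $d_{i}-d_{j}\leq 0$), hence nonnegative. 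Combining the two terms gives $\mathcal{E}_{\tilde{\sigma}^{2}}(y)>\mathcal{E}_{\tilde{\sigma}^{2}}(\mathbf{1}_{n}/n)$, and undoing the relabeling yields $\mathcal{E}_{\sigma^{2}}(x)>\mathcal{E}_{\sigma^{2}}(\mathbf{1}_{n}/n)$ for all $x\in\Delta^{\tau^{-1}}_{n}$, which is the claim. A useful byproduct is that the argument needs no hypothesis $\sigma^{2}\notin\Span{\mathbf{1}_{n}}$: even for uniform variances the cross term vanishes and the quadratic term still forces strict undermining, in contrast with the improvement results of \cref{T4}.

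It is worth emphasizing why I would not mirror the vertex-based proof of \cref{L4}. The undermining region $\mathcal{\bar{A}}_{\sigma^{2}}$ is a strict superlevel set of the convex function $\mathcal{E}_{\sigma^{2}}$ (strictly convex by \cref{L3}), and is therefore nonconvex, so containment $\Delta^{\tau^{-1}}_{n}\subset\mathcal{\bar{A}}_{\sigma^{2}}$ cannot be certified by evaluating $\mathcal{E}_{\sigma^{2}}$ at the vertices of the hypertriangle, as was done for the improvement case. The expansion around $\mathbf{1}_{n}/n$ circumvents this entirely: it shows that $\mathbf{1}_{n}/n$ is in fact the unique minimizer of $\mathcal{E}_{\sigma^{2}}$ over the whole closed reverse hypertriangle, so every other allocation in $\Delta^{\tau^{-1}}_{n}$ strictly exceeds the baseline collective variance.
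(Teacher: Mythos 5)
Your proof is correct, and it takes a genuinely different route from the paper's. The paper argues directly on $\mathcal{E}_{\sigma_{\tau}^{2}}(x_{\tau})-\mathcal{E}_{\sigma_{\tau}^{2}}(\mathbf{1}_{n}/n)$: it picks the index $m$ where the ascending vector $x_{\tau}$ crosses $1/n$ (such $m\geq 1$ exists since $x\neq\mathbf{1}_{n}/n$), factors each term as $(x_{\tau_{j}}+\tfrac{1}{n})(x_{\tau_{j}}-\tfrac{1}{n})\sigma_{\tau_{j}}^{2}$, bounds the block $j\leq m$ from below using $\sigma_{\tau_{m}}^{2}$ and the block $j>m$ using $\sigma_{\tau_{m+1}}^{2}$, and closes with $\sum_{j\leq m}(\tfrac{1}{n}-x_{\tau_{j}})=\sum_{j>m}(x_{\tau_{j}}-\tfrac{1}{n})$ together with $\sigma_{\tau_{m}}^{2}\leq\sigma_{\tau_{m+1}}^{2}$; strictness there comes from the $j\leq m$ terms, which requires the mild case analysis guaranteeing $m\geq 1$. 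You instead exploit that $\mathcal{E}_{\sigma^{2}}$ is a quadratic form, so the expansion around the barycenter is exact, and you split the deviation into a linear (cross) term and a quadratic term: the cross term is nonnegative by the Chebyshev/rearrangement identity $n\sum_{i}\tilde{\sigma}_{i}^{2}d_{i}=\sum_{i<j}(\tilde{\sigma}_{i}^{2}-\tilde{\sigma}_{j}^{2})(d_{i}-d_{j})$ for like-ordered sequences with $\mathbf{1}_{n}^{\top}d=0$, while the quadratic term is strictly positive whenever $d\neq 0$ because all variances are positive. This decomposition buys several things the paper's estimate does not make visible: it cleanly separates nonnegativity (from the like-ordering) from strictness (from $x\neq\mathbf{1}_{n}/n$ alone), it avoids any case split on where $x_{\tau}$ crosses $1/n$, it shows the slightly stronger fact that $\mathbf{1}_{n}/n$ is the unique minimizer of $\mathcal{E}_{\sigma^{2}}$ over the closed reverse hypertriangle, and it makes explicit that no assumption $\sigma^{2}\notin\Span{\mathbf{1}_{n}}$ is needed. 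Your closing remark about why the vertex-checking strategy of \cref{L4} cannot be mirrored here is also on point: containment in a sublevel set of a convex function can be certified at the vertices of a polytope (the maximum sits there), but containment in a strict superlevel set requires controlling the minimum, which is exactly what your expansion does. The paper's proof, for its part, is more self-contained (no appeal to a summation identity) and stays entirely at the level of elementary term-by-term bounds.
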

\begin{proof}
Since $x\in\Delta_{n}^{\tau^{-1}}$, we obtain $x_{\tau_{i}}\leq x_{\tau_{i+1}}$ for all $i\in\until{n-1}$. Without loss of generality, assume that $x_{\tau_{m}}<\frac{1}{n}$ and $x_{\tau_{m+1}}\geq\frac{1}{n}$ with $m<n$. Then, we have 
\begin{align*}
&\mathcal{E}_{\sigma_{\tau}^{2}}(x_{\tau})-\mathcal{E}_{\sigma_{\tau}^{2}}(\frac{\mathbf{1}_{n}}{n})
=\sum_{j=1}^{n}x_{\tau_{j}}^{2}\sigma_{\tau_{j}}^{2}-\frac{1}{n^2}\sum_{j=1}^{n}\sigma_{\tau_{j}}^{2}\\
=&\sum_{j=1}^{m}(x_{\tau_{j}}+\frac{1}{n})(x_{\tau_{j}}-\frac{1}{n})\sigma_{\tau_{j}}^{2}
+\sum_{j=m+1}^{n}(x_{\tau_{j}}+\frac{1}{n})(x_{\tau_{j}}-\frac{1}{n})\sigma_{\tau_{j}}^{2}\\
>&\frac{2\sigma_{\tau_{m+1}}^{2}}{n}\sum_{j=m+1}^{n}(x_{\tau_{j}}-\frac{1}{n})-\frac{2\sigma_{\tau_{m}}^{2}}{n}\sum_{j=1}^{m}(\frac{1}{n}-x_{\tau_{j}})
\geq 0,
\end{align*} 
where the strictly inequality holds since $x\neq\frac{\mathbf{1}_{n}}{n}$. 
Therefore, we obtain $x_{\tau}\in\mathcal{\bar{A}}_{\sigma_{\tau}^{2}}$, which is equivalent to $x\in\mathcal{\bar{A}}_{\sigma^{2}}$. In conclusion, system~\eqref{eq1} with all social power allocations $x\in\Delta_{n}^{\tau^{-1}}$ asymptotically undermines the wisdom.
\end{proof}

\Cref{T5} suggests that if more accurate individuals are allocated less social power, wisdom of crowds is undermined. In other words, for any given $\sigma^{2}$, there exists an ordering of social power allocations such that social influence with social power allocations in that ordering undermines the wisdom, see the orange areas in \cref{f2-a,f2-b}. Comparing \cref{T5} with \cref{T3,T4}, to undermine the wisdom just needs to allocate more social power to less accurate individuals, while generally allocating more social power to more accurate individuals, i.e., ordering-consistency, is not sufficient to improve the wisdom. 

\begin{theorem}[Connection between improving and undermining of the wisdom]\label{T6}
For social influence process~\eqref{eq1} with social power allocation $x\in\Delta_{n}$ and individuals' variances $\sigma^{2}\in\mathbb{R}^{n}$, let $\tau$ be a permutation of $(1,\dots,n)$, $\Delta^{\tau}_{n}=\setdef{z\in\Delta_{n}\setminus\{ \frac{\mathbf{1}_{n}}{n}\}}{z_{\tau_{i}}\geq z_{\tau_{i+1}}, i\in\until{n-1}}$. If system~\eqref{eq1} with all social power allocations $x\in\Delta_{n}^{\tau}$ asymptotically improves the wisdom, then system~\eqref{eq1} with all social power allocations $x\in\Delta_{n}^{\tau^{-1}}$ asymptotically undermines the wisdom.
\end{theorem}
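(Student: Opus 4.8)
The plan is to convert the global hypothesis into the partial-sum condition supplied by \cref{L4}, and then run a single summation-by-parts estimate over the reversed hypertriangle that collapses everything to one scalar inequality. First I would recast the hypothesis. Letting $P$ be the permutation matrix of $\tau$, exactly as in the proof of \cref{T4} the inclusion $\Delta_n^{\tau}\subset\mathcal A_{\sigma^2}$ is equivalent to $\Delta_n^{\tau^0}\subset\mathcal A_{\sigma_\tau^2}$, so \cref{L4}~\ref{L4-1} turns it into
\begin{equation*}
\frac{1}{j^2}\sum_{r=1}^{j}\sigma_{\tau_r}^2<\frac{1}{n^2}\sum_{i=1}^{n}\sigma_i^2,\qquad j\in\until{n-1}.
\end{equation*}
Writing $S_j=\sum_{r=1}^{j}\sigma_{\tau_r}^2$ and $S_n=\sum_{i}\sigma_i^2$, this is $S_j<\frac{j^2}{n^2}S_n$ for every $j<n$. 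This is the only way the hypothesis will enter, and, unlike in \cref{T5}, here $\tau$ need not sort the variances, so the sorted structure exploited there is unavailable and this partial-sum bound must play its role.

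Next I would take an arbitrary $x\in\Delta_n^{\tau^{-1}}$. Since $\tau^{-1}=(\tau_n,\dots,\tau_1)$, membership means $x_{\tau_1}\le x_{\tau_2}\le\cdots\le x_{\tau_n}$ with $x\neq\mathbf 1_n/n$. Setting $d_j=x_{\tau_j}^2-\frac{1}{n^2}$, I would write
\begin{equation*}
\mathcal E_{\sigma^2}(x)-\mathcal E_{\sigma^2}\!\Big(\tfrac{\mathbf 1_n}{n}\Big)=\sum_{j=1}^{n}d_j\,\sigma_{\tau_j}^2=\sum_{j=1}^{n}d_j\,(S_j-S_{j-1}),
\end{equation*}
and apply summation by parts to obtain $d_nS_n+\sum_{j=1}^{n-1}(d_j-d_{j+1})S_j$. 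Because $x_{\tau_j}$ is nondecreasing, $d_j-d_{j+1}\le 0$, so replacing $S_j$ by its upper bound $\frac{j^2}{n^2}S_n$ only lowers each summand; collecting the telescoping coefficients (using $n^2-(n-1)^2=2n-1$) yields the clean bound
\begin{equation*}
\mathcal E_{\sigma^2}(x)-\mathcal E_{\sigma^2}\!\Big(\tfrac{\mathbf 1_n}{n}\Big)\ \ge\ \frac{S_n}{n^2}\sum_{j=1}^{n}(2j-1)\,d_j=\frac{S_n}{n^2}\Big(\sum_{j=1}^{n}(2j-1)x_{\tau_j}^2-1\Big),
\end{equation*}
where I used $\sum_{j=1}^{n}(2j-1)=n^2$ together with $\sum_j x_{\tau_j}=1$.

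It then remains to prove the scalar inequality $\sum_{j=1}^{n}(2j-1)p_j^2>1$ for any nondecreasing $p_1\le\cdots\le p_n$ with $p_j\ge 0$, $\sum_j p_j=1$, and $p\neq\mathbf 1_n/n$. I would subtract $\big(\sum_j p_j\big)^2=1$ and simplify the difference to $2\sum_{k}p_k\sum_{j<k}(p_k-p_j)$; every summand is nonnegative since $p$ is nondecreasing, and at least one is strictly positive once $p$ is nonconstant, giving the strict inequality. Substituting back shows $\mathcal E_{\sigma^2}(x)>\mathcal E_{\sigma^2}(\mathbf 1_n/n)$, i.e.\ $x\in\bar{\mathcal A}_{\sigma^2}$, for every $x\in\Delta_n^{\tau^{-1}}$, which is the claim.

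The main obstacle is the middle step: since $\tau$ is arbitrary the variances $\sigma_{\tau_j}^2$ carry no monotonicity, so a termwise or two-block comparison against $\mathbf 1_n/n$ (as in \cref{T5}) is not available, and a naive pairing of $x$ with its reversal in $\Delta_n^\tau$ fails because the needed symmetric estimate does not hold without invoking the hypothesis. The delicate part is arranging the summation by parts so that the bound $S_j<\frac{j^2}{n^2}S_n$ enters against the \emph{nonpositive} increments $d_j-d_{j+1}$ and the telescope collapses to the single factor $\sum_j(2j-1)x_{\tau_j}^2-1$; once the bookkeeping is set up, the remaining rearrangement inequality is elementary and furnishes the strictness for free.
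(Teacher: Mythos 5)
Your proof is correct and is essentially the paper's own argument in dual bookkeeping: the paper writes $x_{\tau_i}=\sum_{r\le i}q_r$ with nonnegative increments, expands the square, and bounds the resulting \emph{tail} sums of $\sigma^2$ from below, while you apply summation by parts to $\sum_j d_j\sigma_{\tau_j}^2$ and bound the \emph{partial} sums $S_j$ from above against nonpositive coefficients; both routes enter through \cref{L4}~\ref{L4-1} and collapse to the identical key inequality $\sum_{i=1}^{n}(2i-1)x_{\tau_i}^2\ge 1$, established by the same rearrangement computation $2\sum_{k}x_{\tau_k}\sum_{j<k}(x_{\tau_k}-x_{\tau_j})\ge 0$. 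The only cosmetic difference is where strictness is harvested (you take it from the scalar inequality, the paper from the strict tail-sum bound), and both choices are valid.
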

\begin{proof}
Since system~\eqref{eq1} with all social power allocations $x\in\Delta_{n}^{\tau}$ asymptotically improves the wisdom, we have $\Delta_{n}^{\tau}\subset\mathcal{A}_{\sigma^{2}}$, which, by \cref{L4}~\ref{L4-1}, suggests  
\begin{align*}
\frac{1}{i^2}\sum_{r=1}^{i}\sigma_{\tau_{r}}^{2}<\frac{1}{n^2}\sum_{j=1}^{n}\sigma_{j}^{2} 
\end{align*}
for all $i\in\until{n-1}$. Therefore, 
\begin{align*}
\sum_{j=1}^{n}\sigma_{j}^{2}=\sum_{r=1}^{i-1}\sigma_{\tau_{r}}^{2}+\sum_{r=i}^{n}\sigma_{\tau_{r}}^{2}<\frac{(i-1)^2}{n^2}\sum_{j=1}^{n}\sigma_{j}^{2}+\sum_{r=i}^{n}\sigma_{\tau_{r}}^{2},
\end{align*}
which implies 
\begin{equation}\label{e7.1}
\sum_{r=i}^{n}\sigma_{\tau_{r}}^{2}>(1-\frac{(i-1)^2}{n^2})\sum_{j=1}^{n}\sigma_{j}^{2} \quad \text{for all} \quad i\in\{2,\dots,n\}.
\end{equation}
For $x\in\Delta_{n}^{\tau^{-1}}$, i.e., $x_{\tau_{i}}\leq x_{\tau_{i+1}}$, $i\in\until{n-1}$, denote $q_{1}=x_{\tau_{1}}$ and $q_{i}=x_{\tau_{i}}-x_{\tau_{i-1}}$ for $i\in\{2,\dots,n\}$. Then, we have $q_{i}\geq 0$ and $x_{\tau_{i}}=\sum_{r=1}^{i}q_{r}$. Moreover, 
\begin{align*}
\sum_{i=1}^{n}x_{\tau_{i}}^{2}\sigma_{\tau_{i}}^{2}
=\sum_{i=1}^{n}(\sum_{r=1}^{i}q_{r})^{2}\sigma_{\tau_{i}}^{2}
=&\sum_{i=1}^{n}\sigma_{\tau_{i}}^{2}\sum_{r=1}^{i}q_{r}^{2}+2\sum_{i=2}^{n}\sigma_{\tau_{i}}^{2}\sum_{r=1}^{i}q_{r}\sum_{j=r+1}^{i}q_{j}\\
=&q_{1}^{2}\sum_{r=1}^{n}\sigma_{\tau_{r}}^{2}+\sum_{i=2}^{n}q_{i}^{2}\sum_{r=i}^{n}\sigma_{\tau_{r}}^{2}+2\sum_{i=2}^{n}q_{i}\sum_{j=1}^{i-1}q_{j}\sum_{r=i}^{n}\sigma_{\tau_{r}}^{2}.
\end{align*} 
By~\eqref{e7.1}, we obtain
\begin{align*}
\sum_{i=1}^{n}x_{\tau_{i}}^{2}\sigma_{\tau_{i}}^{2}>&(q_{1}^{2}+\sum_{i=2}^{n}(1-\frac{(i-1)^2}{n^2})(q_{i}^{2}+2q_{i}\sum_{j=1}^{i-1}q_{j}))\sum_{r=1}^{n}\sigma_{\tau_{r}}^{2},
\end{align*} 
where
\begin{align*}
q_{1}^{2}+\sum_{i=2}^{n}(1-\frac{(i-1)^2}{n^2})(q_{i}^{2}+2q_{i}\sum_{j=1}^{i-1}q_{j})=&(\sum_{i=1}^{n}q_{i})^{2}-\frac{1}{n^2}\sum_{i=2}^{n}(i-1)^{2}(q_{i}^{2}+2q_{i}\sum_{j=1}^{i-1}q_{j})\\
=&x_{\tau_{n}}^{2}\!-\!\frac{1}{n^2}\sum_{i=2}^{n}(i\!-\!1)^{2}(x_{\tau_{i}}^{2}\!-\!x_{\tau_{i-1}}^{2}\!)\!\\
=&\frac{1}{n^2}\sum_{i=1}^{n}(2i-1)x_{\tau_{i}}^{2}
\end{align*} 
due to $\sum_{i=2}^{n}(i\!-\!1)^{2}(x_{\tau_{i}}^{2}\!-\!x_{\tau_{i-1}}^{2}\!)=(n-1)^{2}x_{\tau_{n}}^{2}-\sum_{i=1}^{n-1}(2i-1)x_{\tau_{i}}^{2}$. Note that 
\begin{align*}
\sum_{i=1}^{n}(2i-1)x_{\tau_{i}}^{2}-1=&\sum_{i=1}^{n}(2i-1)x_{\tau_{i}}^{2}-(\sum_{i=1}^{n}x_{\tau_{i}})^{2}\\
=&2\sum_{i=2}^{n}(i-1)x_{\tau_{i}}^{2}-2\sum_{i=2}^{n}\sum_{j=1}^{i-1}x_{\tau_{i}}x_{\tau_{j}}\\
=&2\sum_{i=2}^{n}x_{\tau_{i}}((i-1)x_{\tau_{i}}-\sum_{j=1}^{i-1}x_{\tau_{j}})\geq 0
\end{align*}
because $x_{\tau_{i}}\geq x_{\tau_{j}}$ for all $j\in\until{i-1}$. Hence, we obtain 
\begin{align*}
\sum_{i=1}^{n}x_{\tau_{i}}^{2}\sigma_{\tau_{i}}^{2}>\frac{1}{n^2}\sum_{i=1}^{n}(2i-1)x_{\tau_{i}}^{2}\sum_{r=1}^{n}\sigma_{r}^{2}\geq\frac{1}{n^2}\sum_{r=1}^{n}\sigma_{r}^{2},
\end{align*} 
which implies that social influence asymptotically undermines the wisdom. 
\end{proof}

\Cref{T6} implies that if there exists a permutation $\tau$ of $(1,\dots,n)$ such that system~\eqref{eq1} with all social power allocations $x\in\Delta_{n}^{\tau}$ asymptotically improves the wisdom, then system~\eqref{eq1} with all social power allocations $x\in\Delta_{n}^{\tau^{-1}}$ asymptotically undermines the wisdom, see \cref{f2-c,f2-d}. Moreover, \cref{T4} suggests that there may be more than one such permutation, as shown in \cref{f2-d}, and \cref{Alg1} can find all these permutations for given $\sigma^{2}$. However, the converse of \cref{T6} does not necessarily hold, i.e., all social power allocations $x\in\Delta_{n}^{\tau}$ asymptotically undermine the wisdom does not imply all social power $x\in\Delta_{n}^{\tau^{-1}}$ asymptotically improves the wisdom generally, see \cref{f2-a,f2-b} for counter examples. A direct corollary combining \cref{T4} and \cref{T6} is as follows. 

\begin{corollary}
For social influence process~\eqref{eq1} with social power allocation $x\in\Delta_{n}$ and individuals' variances satisfying $\sigma_{i}^{2}\leq\sigma_{i+1}^{2}$ for all $i\in\until{n-1}$, let $\tau$ be a permutation of $(1,\dots,n)$, $\Delta^{\tau}_{n}=\setdef{z\in\Delta_{n}\setminus\{ \frac{\mathbf{1}_{n}}{n}\}}{z_{\tau_{i}}\geq z_{\tau_{i+1}}, i\in\until{n-1}}$. If individuals admit a $\tau$-hierarchy, then system~\eqref{eq1} with all social power allocations $x\in\bigcup_{\tau^{\prime}\in\setdef{\bar{\tau}\in\mathcal{T}}{\bar{\tau}\preceq\tau}}\Delta^{(\tau^{\prime})^{-1}}_{n}$ asymptotically undermines the wisdom.
\end{corollary}

\section{Improving and undermining the wisdom with the French-DeGroot opinion dynamics}\label{S5}

As reported in~\cite{JB-DB-DC:17} and~\cite{GM-GGdP:15}, one of the most significant empirical findings on wisdom of crowds is the \emph{PAP hypothesis}, which attributes the improvement of wisdom to that more accurate individuals are more resistant to social influence, while less accurate individuals are more susceptible to social influence. In this section, we apply our theoretical results to the FD influence process to examine the effects of the \emph{PAP hypothesis}. 
\subsection{The French-DeGroot opinion dynamics and the PAP hypothesis}

Suppose that individuals interact their estimates according to the FD model:
\begin{equation}\label{e1}
y(k+1)=Wy(k),
\end{equation}
where $W$ is the row-stochastic influence matrix. More specifically, individual $i$ updates its estimate to a convex combination of estimates of others and itself, that is, $y_{i}(k+1)=\sum_{j=1}^{n}W_{ij}y_{j}(k)$, where $W_{ij}$ is the influence weight individual $i$ assigns to individual $j$ and $W_{ii}$ is its self-weight. Sociologically, $W_{ii}$ is interpreted as 
individual $i$'s self-appraisal and indicates its resistance to social influence, while $1-W_{ii}$ represents its susceptibility to social influence. Let $\gamma_{i}=1-W_{ii}$, then there exists row-stochastic, zero-diagonal matrix $C\in\mathbb{R}^{n\times n}$, called the relative interaction matrix, such that $C_{ij}=0$ for all $i=j$ and $W_{ij}=\gamma_{i}C_{ij}$ otherwise. Denote by $\gamma\in\mathbb{R}^{n}$ the susceptibility vector,~\eqref{e1} can be written as
\begin{equation}\label{e2}
y(k+1)=[\gamma]Cy(k)+(I_{n}-[\gamma])y(k)
\end{equation}
with $W=[\gamma]C+I_{n}-[\gamma]$. Note that $\gamma_{i}=0$ means that individual $i$ does not take into account others' estimates, and $\gamma_{i}=1$ indicates that individual $i$ does not consider its own estimates. We assume that every individual is open-minded about others' estimates and there exists at least one individual who is confident in its own estimates, i.e., $\gamma\in\Gamma=\setdef{z\in\mathbb{R}^{n}\setminus\{\mathbf{1}_{n}\}}{1\geq z_{i}>0, i\in\until{n}}$. With a slight abuse of terminology, we regard system~\eqref{e2} as a population $\mathcal{P}_{\sigma^{2},\gamma}$ in a relative interaction network $\mathcal{G}(C)$, where a population is a group of individuals with some characteristics, which mainly concerned here are individual's susceptibility and variance. 
\begin{remark}[Model advantages]
It is reported in~\cite{JB-DB-DC:17} that over $80\%$ people in the experiments displayed behavior of opinion updating consistent with system~\eqref{e2}. Moreover,~\eqref{e2} can model not only influence networks in which the relative interaction matrix $C$ is determined or accessible by individuals, such as, face to face discussion or social media networks, but also the influence networks in which $C$ is unknown to individuals, such as the Delphi model. Recall that the four basic features of the Delphi method are anonymity, iteration, controlled feedback and statistical aggregation. If we assume that $C$ is controlled by a moderator and is unknown to individuals, then system~\eqref{e2} is a well-posed influence network formulation for the Delphi method, where anonymity and controlled feedback are guaranteed by the assumption that $C_{ij}$ is unknown to individuals; iteration is achieved by the dependence of $y(k)$ upon time scale $k$; and the statistical aggregation is arithmetic average.
\end{remark}

Motivated by the \emph{PAP hypothesis}, we have the following definitions for populations, which are presented analogously to \cref{D2}.
\begin{definition}[PAP populations]\label{D3}
For a population $\mathcal{P}_{\sigma^{2},\gamma}$ consisting of individuals $\until{n}$ with variances $\sigma^{2}\in\mathbb{R}^{n}$ and susceptibilities $\gamma\in\setdef{z\in\mathbb{R}^{n}\setminus\{\mathbf{1}_{n}\}}{1\geq z_{i}>0, i\in\until{n}}$, we say $\mathcal{P}_{\sigma^{2},\gamma}$ is a
\begin{enumerate}
\item PAP population if $\gamma_{i}\geq\gamma_{j}$ for all $\sigma^{2}_{i}\geq\sigma^{2}_{j}$, and $\gamma\in\Span{\mathbf{1}_{n}}$ only if $\sigma^{2}\in\Span{\mathbf{1}_{n}}$;

\item strong PAP population if $\mathcal{P}_{\sigma^{2},\gamma}$ is a PAP population, and $\gamma_{j}\sigma^{2}_{i}\geq\gamma_{i}\sigma^{2}_{j}$ for all $\sigma^{2}_{i}\geq\sigma^{2}_{j}$; 

\item maximal PAP population if $\frac{\gamma_{i}}{\gamma_{j}}=\frac{\sigma^{2}_{i}}{\sigma^{2}_{j}}$ for all $i,j\in\until{n}$.
\end{enumerate} 
\end{definition}

In a PAP population, less accurate individuals are more susceptible to social influence, while more accurate individuals are more resistant to social influence. Strong PAP population requires that more accurate individuals are more resistant to social influence compared with less accurate individuals, but can not be too much more resistant. Maximal PAP population requires that the proportions of individuals' susceptibilities are exactly the same as the proportions of their variances. 
\subsection{Improving and optimizing the wisdom in irreducible influence networks}
It is well known that for system~\eqref{e2}, $\lim_{k\to\infty}y(k)$ exists if and only if every sink SCC of $\mathcal{G}(W)$ is aperiodic; if $\mathcal{G}(W)$ is strongly connected and aperiodic (or equivalently, $W$ is primitive), $y(k)$ reaches consensus asymptotically, that is, $\lim_{k\to\infty}y(k)=\omega^{\top}y(0)\mathbf{1}_{n}$, where $\omega\in\interior{\Delta_{n}}$ is the left dominant eigenvector of $W$ satisfying $\omega^{\top}W=\omega^{\top}$~\cite{FB:22}.
If the relative interaction matrix $C$ is reducible, then $W$ is reducible since $\gamma>0$. Consequently, individuals in sink SCCs have positive social power and all others' social power is $0$. In this case, whether system~\eqref{e2} improves the wisdom or not is decided by the social power of individuals in sink SCCs, which corresponds to row-stochastic and irreducible submatrices of $C$. For simplicity, we assume $C$ is irreducible, and $c\in\Delta_{n}$ is its left dominant eigenvector. Since $\gamma\in\Gamma=\setdef{z\in\mathbb{R}^{n}\setminus\{\mathbf{1}_{n}\}}{1\geq z_{i}>0, i\in\until{n}}$, $W=[\gamma]C+I_{n}-[\gamma]$ is irreducible and aperiodic~\cite[Corollary 8.4.7]{RAH-CRJ:12}. Therefore, we have $x^{\top}W=x^{\top}[\gamma]C+x^{\top}(I_{n}-[\gamma])=x^{\top}$,
by which there holds $x^{\top}[\gamma]C=x^{\top}[\gamma]$. That is, $x^{\top}[\gamma]>0$ is a left eigenvector of $C$ associated with eigenvalue $1$. By Perron-Frobenius theorem~\cite[Theorem 2.12]{FB:22}, $[\gamma]x=\eta c$ with $\eta=\sum_{i=1}^{n}\gamma_{i}x_{i}$, which implies
\begin{equation}\label{e8}
\frac{x_{i}}{x_{j}}=\frac{c_{i}}{c_{j}}\frac{\gamma_{j}}{\gamma_{i}} \quad \text{for all} \quad i,j\in\until{n}.
\end{equation}

\begin{corollary}[Improving/optimizing the wisdom with PAP populations in irreducible influence networks]\label{C1}
For system~\eqref{e2} with population $\mathcal{P}_{\sigma^{2},\gamma}$ and relative interaction network $\mathcal{G}(C)$, where $\gamma\in\Gamma$ with $\Gamma=\setdef{z\in\mathbb{R}^{n}\setminus\{\mathbf{1}_{n}\}}{1\geq z_{i}>0, i\in\until{n}}$, $C$ is irreducible with dominant left eigenvector $c\in\interior\Delta_{n}$, the following statements hold:
\begin{enumerate}
\item assume that $\sigma^{2}\notin\Span{\mathbf{1}_{n}}$, then system~\eqref{e2} asymptotically improves the wisdom if $\frac{\sigma_{j}^{2}}{\sigma_{i}^{2}}\frac{\gamma_{i}}{\gamma_{j}}\leq\frac{c_{i}}{c_{j}}\leq\frac{\gamma_{i}}{\gamma_{j}}$ for all $\sigma_{i}^{2}\geq\sigma_{j}^{2}$ and $\frac{c_{i}}{c_{j}}<\frac{\gamma_{i}}{\gamma_{j}}$ for at least one $\sigma_{i}^{2}\geq\sigma_{j}^{2}$;\label{C1-1}

\item assume that $\mathcal{P}_{\sigma^{2},\gamma}$ is a maximal PAP population, then system~\eqref{e2} asymptotically optimizes the wisdom if and only if $\mathcal{G}(C)$ is democratic; \label{C1-2}

\item assume that individuals admit a $\tau$-hierarchy, then system~\eqref{e2} asymptotically improves the wisdom if $\frac{c_{\tau_{i}}}{c_{\tau_{i+1}}}\geq\frac{\gamma_{\tau_{i}}}{\gamma_{\tau_{i+1}}}$ for all $i\in\until{n-1}$ with at least one strict inequality holding. \label{C1-3}

\end{enumerate}
\end{corollary}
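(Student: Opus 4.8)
The plan is to exploit the eigenvector identity~\eqref{e8}, namely $\frac{x_i}{x_j}=\frac{c_i}{c_j}\frac{\gamma_j}{\gamma_i}$, to translate each hypothesis---stated in terms of the French--DeGroot primitives $\sigma^2$, $\gamma$ and $c$---into a property of the social power allocation $x$, and then to invoke the system-level \cref{T2,T4} from \cref{S3}. Since $\gamma\in\Gamma$ forces $\gamma_i>0$ and $c\in\interior\Delta_{n}$ forces $c_i>0$, the relation $[\gamma]x=\eta c$ with $\eta=\sum_i\gamma_i x_i>0$ already gives $x_i=\eta c_i/\gamma_i>0$, so $x\in\interior\Delta_{n}$ throughout; this disposes of the interiority requirements hidden in the consistency notions (cf.~\cref{r3}).

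For~\ref{C1-1}, I would show the system is gap-consistent and apply \cref{T2}~\ref{T2-2}. Fix any pair with $\sigma_i^2\ge\sigma_j^2$. Via~\eqref{e8}, the right hypothesis $\frac{c_i}{c_j}\le\frac{\gamma_i}{\gamma_j}$ rearranges to $x_i\le x_j$ (ordering-consistency), while the left hypothesis $\frac{\sigma_j^2}{\sigma_i^2}\frac{\gamma_i}{\gamma_j}\le\frac{c_i}{c_j}$ rearranges to $x_i\sigma_i^2\ge x_j\sigma_j^2$ (the gap condition of \cref{D2}~\ref{D2-3}). The strict inequality assumed for at least one such pair yields $x_i<x_j$ there, hence $x\neq\mathbf{1}_{n}/n$; combined with $\sigma^2\notin\Span{\mathbf{1}_{n}}$ this completes gap-consistency, and \cref{T2}~\ref{T2-2} gives improvement.

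For~\ref{C1-2}, by \cref{T2}~\ref{T2-3} the system optimizes the wisdom if and only if it is maximally consistent, i.e.\ $\frac{x_i}{x_j}=\frac{\sigma_j^2}{\sigma_i^2}$ for all $i,j$. Substituting the maximal-PAP identity $\frac{\gamma_j}{\gamma_i}=\frac{\sigma_j^2}{\sigma_i^2}$ into~\eqref{e8} reduces this to $\frac{c_i}{c_j}=1$ for all $i,j$, i.e.\ $c=\mathbf{1}_{n}/n$. The remaining step is the equivalence $c=\mathbf{1}_{n}/n\iff\mathcal{G}(C)$ democratic: as $C$ is row-stochastic, $c^\top C=c^\top$ with $c=\mathbf{1}_{n}/n$ forces unit column sums, hence $C$ doubly-stochastic; conversely an irreducible doubly-stochastic $C$ has $\mathbf{1}_{n}/n$ as its unique Perron left eigenvector.

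For~\ref{C1-3}, the hypothesis $\frac{c_{\tau_i}}{c_{\tau_{i+1}}}\ge\frac{\gamma_{\tau_i}}{\gamma_{\tau_{i+1}}}$ rearranges through~\eqref{e8} to $x_{\tau_i}\ge x_{\tau_{i+1}}$, and the at-least-one strict inequality again forces $x\neq\mathbf{1}_{n}/n$, so $x\in\Delta_n^{\tau}$. Since $\sigma_\tau^2$ satisfies~\eqref{e3.1}, the argument in the proof of \cref{T4}~\ref{T4-2} (via \cref{L4}~\ref{L4-1} applied to the permuted variances) gives $\Delta_n^{\tau}\subset\mathcal{A}_{\sigma^2}$, whence improvement. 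The only step demanding care throughout is bookkeeping the inequality directions: because $\gamma$ enters~\eqref{e8} inverted, a larger susceptibility \emph{decreases} social power, so each hypothesis must be flipped exactly once when passing to $x$; the main (mild) obstacle is keeping these flips consistent across all three parts, together with verifying the doubly-stochastic characterization used in~\ref{C1-2}.
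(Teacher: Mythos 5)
Your proposal is correct and follows essentially the same route the paper intends: the paper proves \cref{C1} implicitly by deriving the eigenvector identity~\eqref{e8} just before the statement and then appealing to \cref{T2,T4} (its discussion after the corollary confirms that the condition in part~\ref{C1-1} is exactly gap-consistency). Your translations of the three hypotheses into gap-consistency, maximal consistency with $c=\mathbf{1}_n/n$ (equivalently $C$ doubly-stochastic), and membership $x\in\Delta_n^{\tau}$ with $\Delta_n^{\tau}\subset\mathcal{A}_{\sigma^2}$ via \cref{L4}~\ref{L4-1} are all accurate, including the handling of the strict inequalities that rule out $x=\mathbf{1}_n/n$.
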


\Cref{C1}~\ref{C1-1} requires that $\frac{c_{i}}{c_{j}}\leq\frac{\gamma_{i}}{\gamma_{j}}$ for $\sigma_{i}^{2}\geq\sigma_{j}^{2}$, which holds if $\gamma_{i}>\gamma_{j}$ and $c_{i}<c_{j}$. That is, more accurate individual is more resistant to social influence and has larger network centrality. If the condition of \cref{C1}~\ref{C1-1} holds, then system~\eqref{e2} is gap-consistent, thus improves the wisdom. However, $\mathcal{P}_{\sigma^{2},\gamma}$ is not necessarily a PAP population under the condition of \cref{C1}~\ref{C1-1}. Generally, the \emph{PAP hypothesis} is not adequate to improve the wisdom in the FD model. \Cref{C1} shows that even a maximal PAP population is not sufficient to improve the wisdom. In \cref{f3-d,f3-e}, $\mathcal{E}^{1}$ and $\mathcal{E}^{4}$ are the trajectories of the variances of collective estimates of system~\eqref{e2} under relative influence network $\mathcal{G}(C^2)$ depicted in \cref{f3-b} with centralities scores $c=(0.4, 0.4, 0.2)^{\top}$. Individuals' susceptibilities are $\gamma=(0.1, 0.15, 0.1)^{\top}$ and $\gamma=(0.1, 0.8, 0.6)^{\top}$, which satisfy the conditions in \cref{C1}~\ref{C1-1} and \cref{C1}~\ref{C1-3}, respectively. Thus, system~\eqref{e2} improves the wisdom. 

\begin{figure}[t]
\setlength{\belowcaptionskip}{-1cm}
\centering
 \subfloat[t][ Democratic network $\mathcal{G}(C^{1})$]{
\begin{minipage}[t]{0.29\textwidth}\label{f3-a}
 \centering
  \includegraphics[width=\hsize]{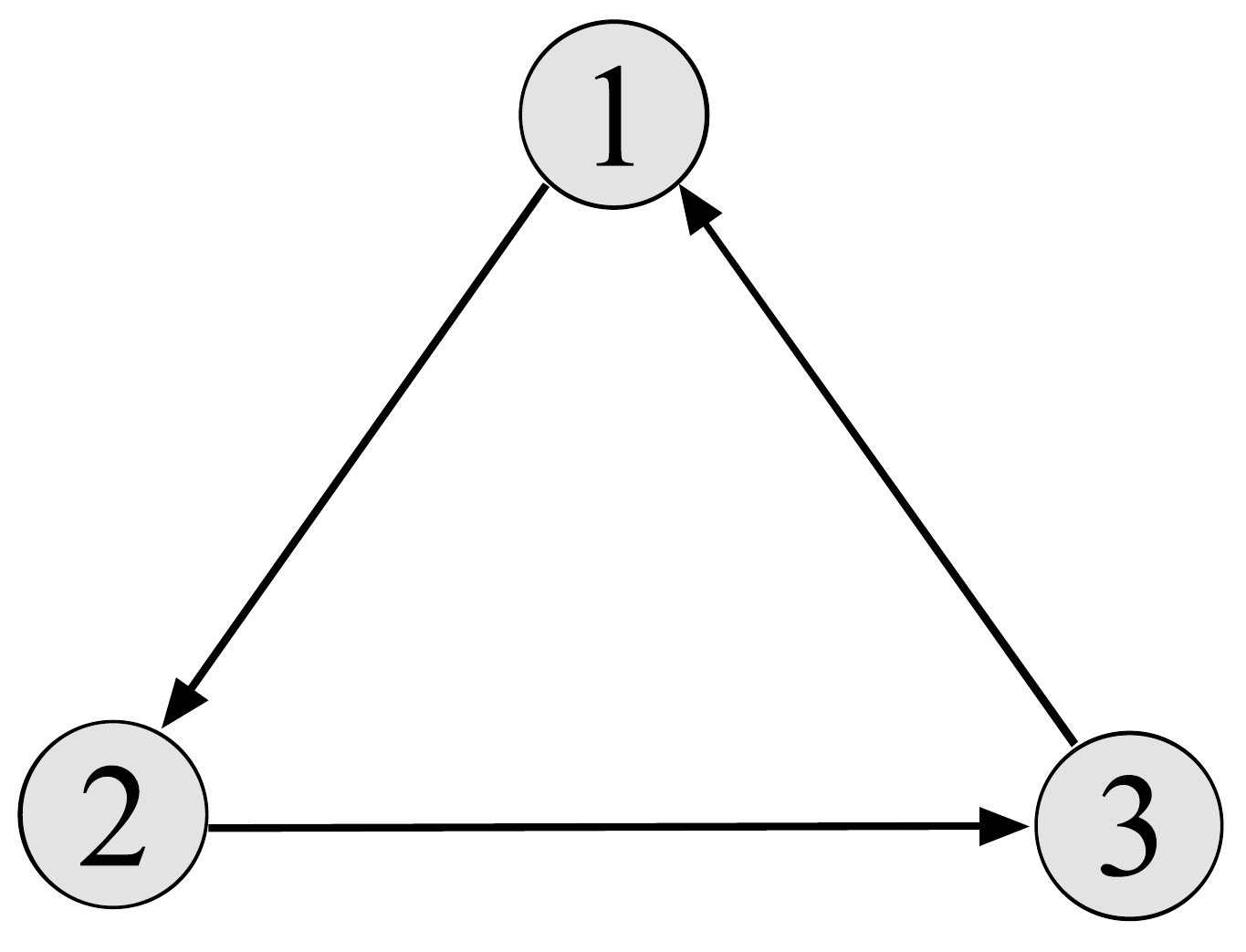}\\
 \end{minipage}
}
\hspace{1ex}
\subfloat[t][Irreducible network $\mathcal{G}(C^{2})$]{
\begin{minipage}[t]{0.3\textwidth}\label{f3-b}
 \centering
  \includegraphics[width=\hsize]{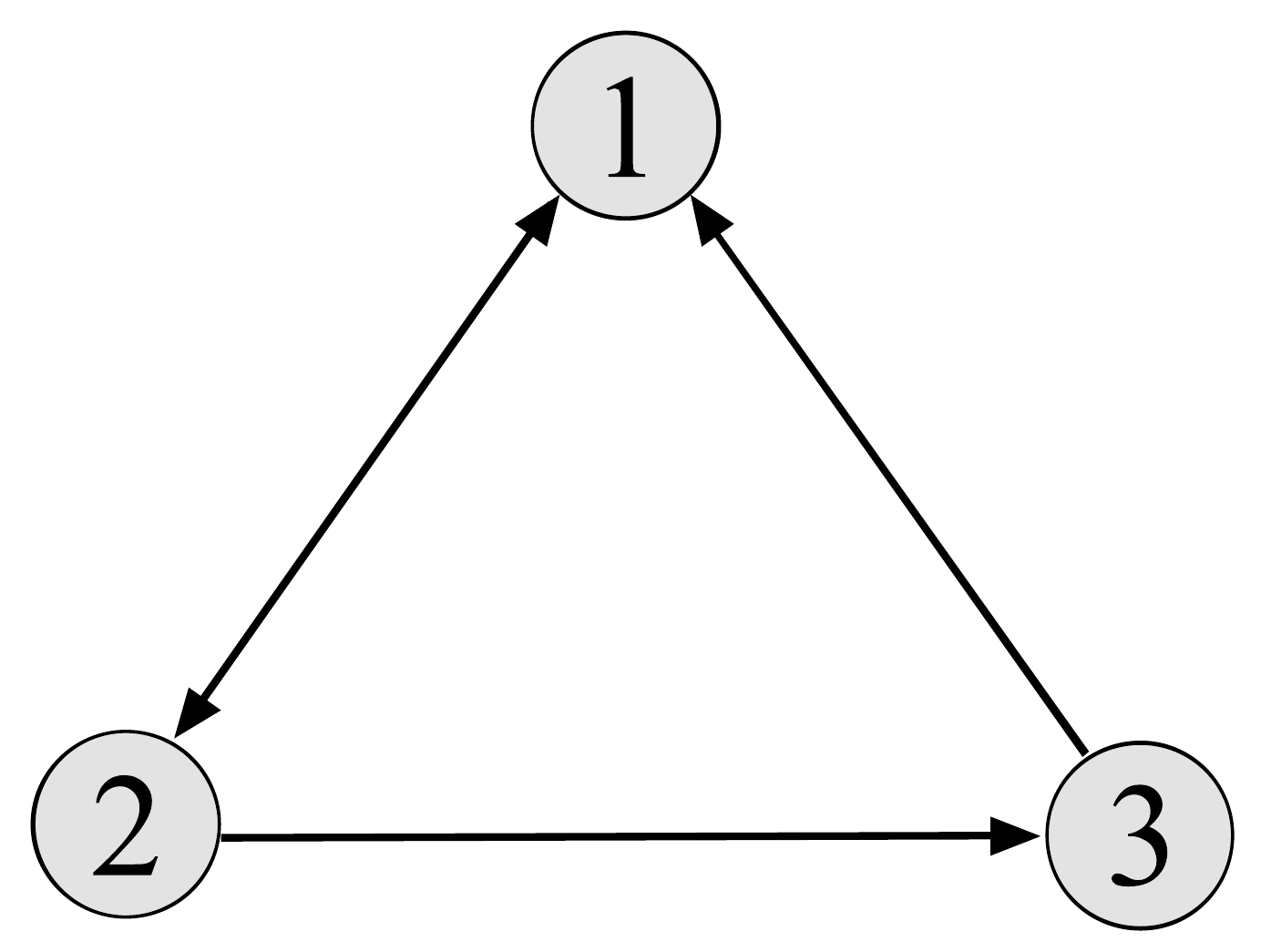}\\
  \end{minipage}
}
\hspace{1ex}
\subfloat[t][Autocratic network $\mathcal{G}(C^{3})$]{
\begin{minipage}[t]{0.29\textwidth}\label{f3-c}
 \centering
  \includegraphics[width=\hsize]{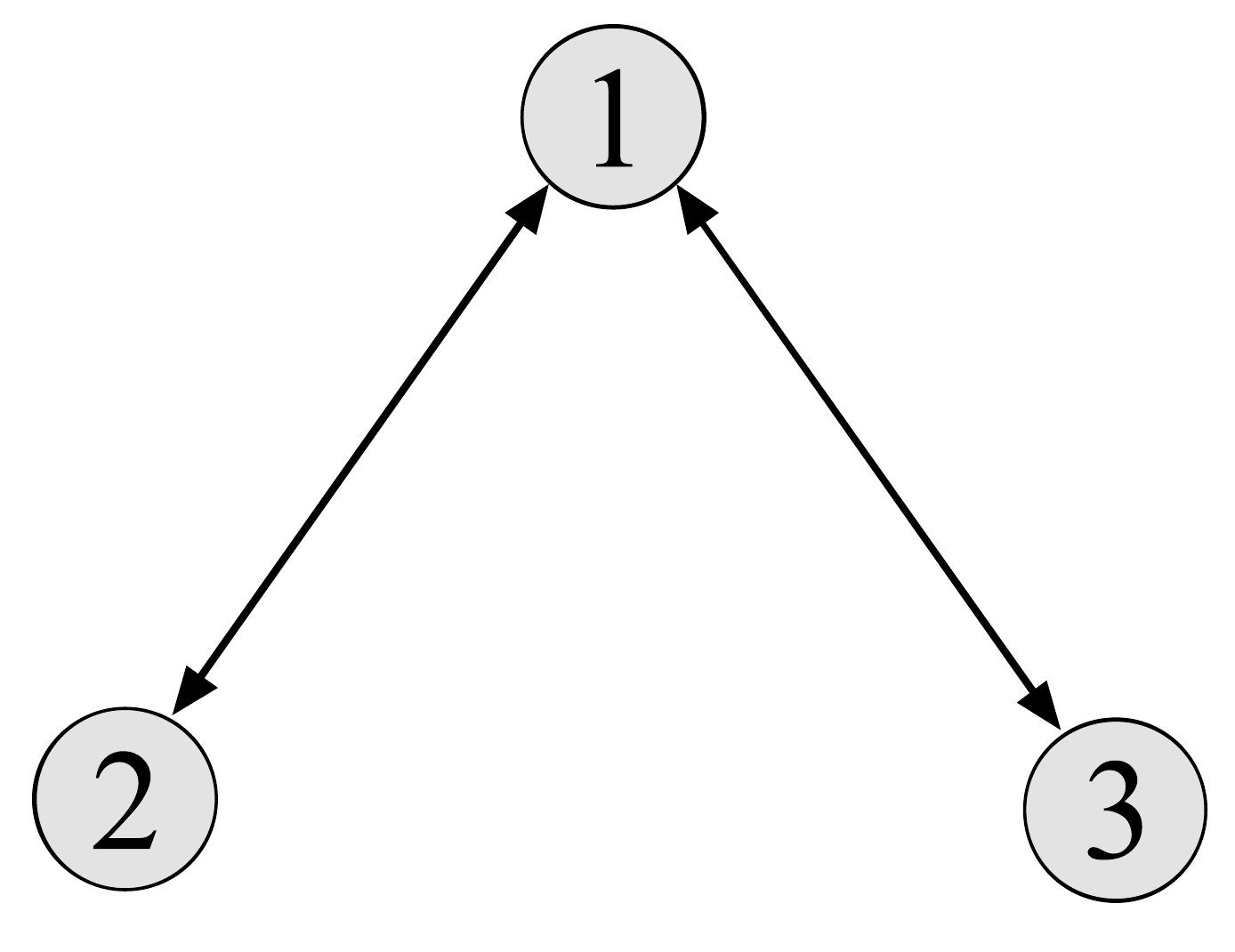}\\
  \end{minipage}
}
\vspace{-1ex}

\subfloat[t][$\sigma^{2}=(1, 2, 3)^{\top}$]{
\begin{minipage}[t]{0.4\textwidth}\label{f3-d}
 \centering
  \includegraphics[width=\hsize]{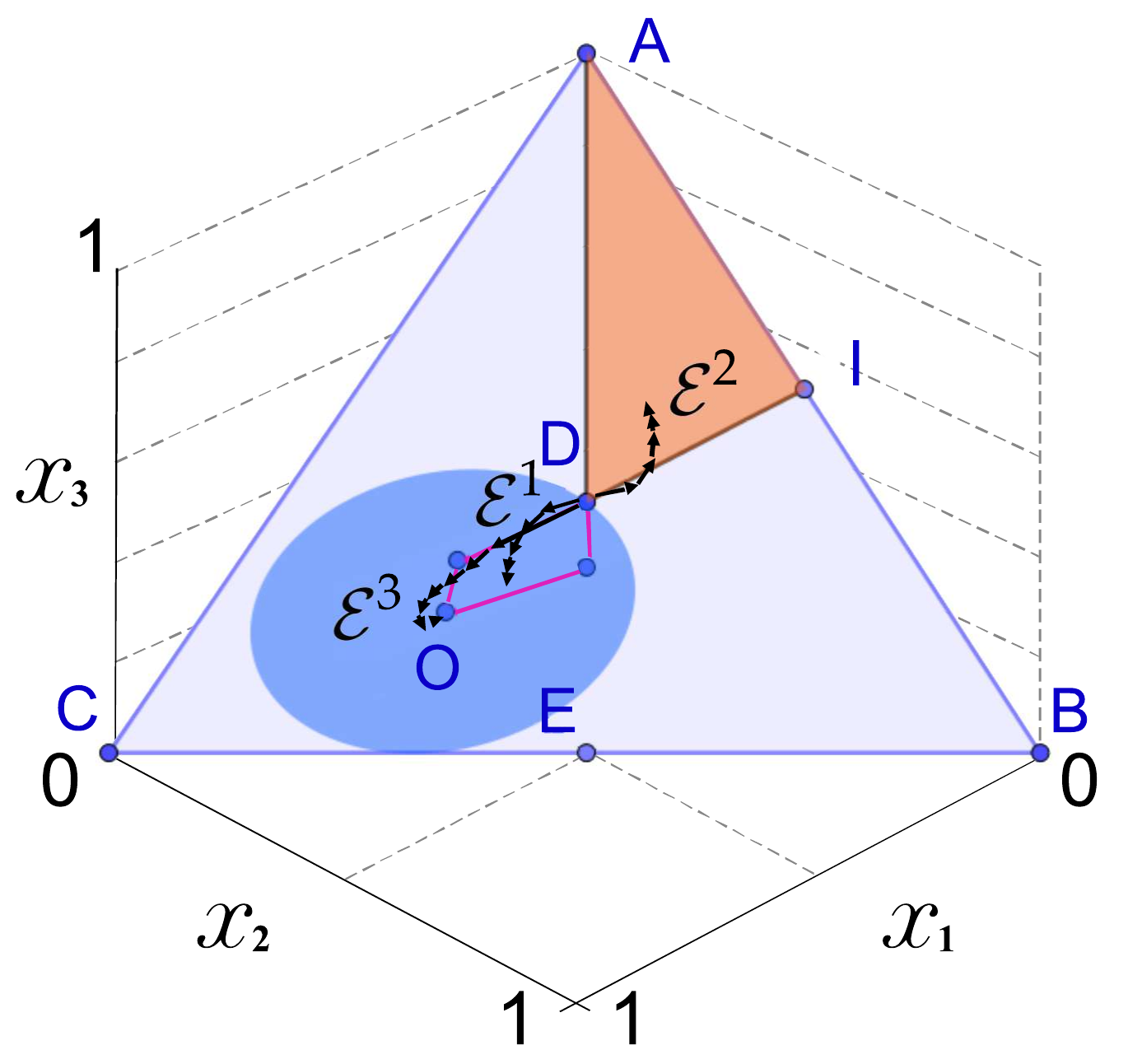}\\
 \end{minipage}
}
\hspace{3ex}
\subfloat[t][$\sigma^{2}=(1, 4, 9)^{\top}$]{
\begin{minipage}[t]{0.4\textwidth}\label{f3-e}
 \centering
  \includegraphics[width=\hsize]{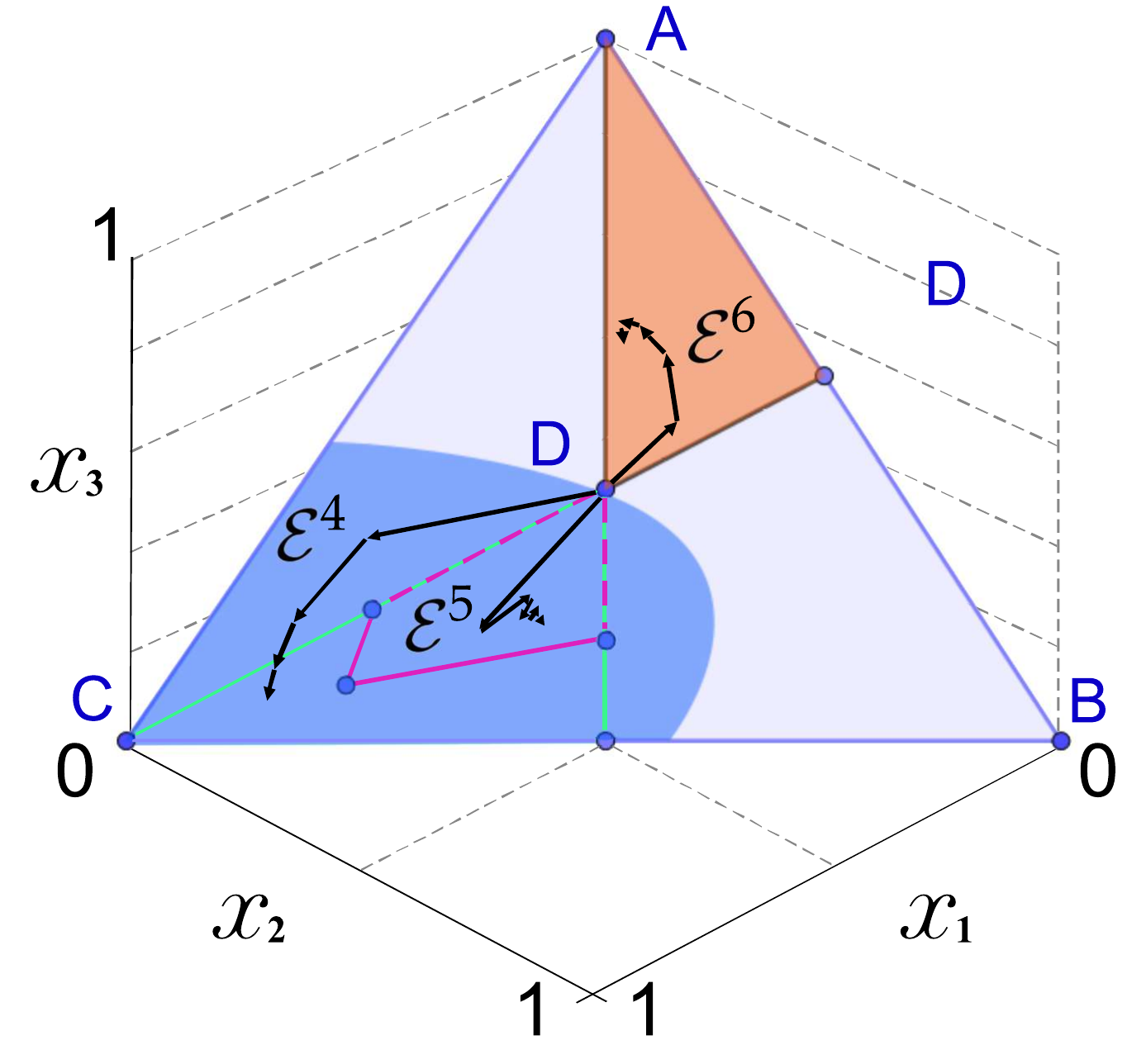}\\
 \end{minipage}
}
\caption{\Cref{f3-a,f3-b,f3-c} depict a democratic network $\mathcal{G}(C^{1})$, an irreducible network $\mathcal{G}(C^{2})$ and an autocratic network $\mathcal{G}(C^{3})$ with center node $1$, respectively, where $C^{1}=[0 \ 1 \ 0;0 \ 0 \ 1;1 \ 0 \ 0]$, $C^{2}=[0 \ 1 \ 0;0.5 \ 0 \ 0.5;1 \ 0 \ 0]$ and $C^{3}=[0 \ 0.3 \ 0.7;1 \ 0 \ 0;1 \ 0 \ 0]$. \Cref{f3-d,f3-e} depict the trajectories of the variances of the collective estimates $\Var[y_{
col}(k)]$ of the FD model~\eqref{e2} with different settings of individuals' variances, susceptibilities and relative interaction networks.}\label{fig3}
\end{figure}

\begin{remark}
In Delphi experiments, the relative interaction matrix $C$ is unknown to individuals and is designed by the inquiry moderator. A widely-adopted algorithm for designing relative influence network $\mathcal{G}(C)$ is the equal-neighbor model~\cite[Chapter~5]{FB:22}. Let $M$ be a binary adjacent matrix associated with connected undirected graph $\mathcal{G}$, i.e., $M_{ij}=M_{ji}=1$ if $(i,j)$ is an edge of $\mathcal{G}$, and $M_{ij}=M_{ji}=0$ otherwise. The corresponding equal-neighbor relative interaction matrix is $C=[M\mathbf{1}_{n}]^{-1}M$. Let $d=M\mathbf{1}_{n}$ be the degree vector of $\mathcal{G}$. It is well known that the left dominant eigenvector of $C$ is $c=\frac{1}{d^{\top}\mathbf{1}_{n}}(d_{1},\dots,d_{n})^{\top}$, where $M$ and $d$ is designed by the moderator. By~\eqref{e2}, 
\begin{align*}\label{e2.6}
\gamma_{i}=\frac{y_{i}(k+1)-y_{i}(k)}{\sum_{j=1}^{n}C_{ij}y_{j}(k)-y_{i}(k)},
\end{align*}
which means that the moderator can approximate an individual's susceptibility using its estimates at successive time step. Since $\frac{c_{i}}{c_{j}}=\frac{d_{i}}{d_{j}}$, \cref{C1} suggests that the moderator may dynamically assign node degrees to individuals according to their observed susceptibilities to improve the collective estimates.
\end{remark}

In the investigation of influence networks, democratic and autocratic networks are studied frequently. It is of interest that if $\mathcal{G}(C)$ is democratic, the left dominant eigenvector of $C$ (also known as eigenvector centrality scores) is $\mathbf{1}_{n}/n$; if $\mathcal{G}(C)$ is autocratic with center node $i$, the center node has the largest centrality score $1/2$, which is strictly larger than that of every non-center node~\cite{PJ-AM-NEF-FB:13d}. \Cref{f3-a,f3-c} depict a democratic network and an autocratic network with $3$ nodes, respectively.
\begin{corollary}[Improving/optimizing the wisdom in democratic/autocratic influence networks]\label{C2}
For system~\eqref{e2} with population $\mathcal{P}_{\sigma^{2},\gamma}$ and relative interaction network $\mathcal{G}(C)$, where $\gamma\in\Gamma=\setdef{z\in\mathbb{R}^{n}\setminus\{\mathbf{1}_{n}\}}{1\geq z_{i}>0, i\in\until{n}}$, 
\begin{enumerate}
\item suppose that $\mathcal{G}(C)$ is democratic, then\label{C2-1}
\begin{enumerate}
\item system~\eqref{e2} asymptotically optimizes the wisdom if and only if $\mathcal{P}_{\sigma^{2},\gamma}$ is a maximal PAP population; \label{C2-1-a}

\item if $\sigma^{2}\notin\Span{\mathbf{1}_{n}}$ and $\mathcal{P}_{\sigma^{2},\gamma}$ is a strong PAP population, system~\eqref{e2} asymptotically improves the wisdom; 

\item assume $\sigma_{i}^{2}\leq\sigma_{i+1}^{2}$ for all $i\in\until{n-1}$. System~\eqref{e2} with any PAP population asymptotically improves the wisdom if and only if individuals admit the $\tau^{0}$-hierarchy.

\end{enumerate}
\item Suppose that $\mathcal{G}(C)$ is autocratic with center node $l$, $C_{li}$ is the influence weight the center node accords to $i$, then \label{C2-2}
\begin{enumerate}
\item if $\sigma^{2}\notin\Span{\mathbf{1}_{n}}$, $\frac{\gamma_{i}}{\gamma_{l}}\frac{\sigma_{l}^{2}}{\sigma^{2}_{i}}\leq C_{li}<\frac{\gamma_{i}}{\gamma_{l}}$ for all $\sigma_{i}^{2}\geq\sigma_{l}^{2}$, $\frac{\gamma_{i}}{\gamma_{l}}\frac{\sigma_{l}^{2}}{\sigma^{2}_{i}}\geq C_{li}>\frac{\gamma_{i}}{\gamma_{l}}$ for all $\sigma_{i}^{2}\leq\sigma_{l}^{2}$ and $\frac{\gamma_{i}}{\gamma_{j}}\frac{\sigma_{j}^{2}}{\sigma^{2}_{i}}\leq\frac{C_{li}}{C_{lj}}\leq\frac{\gamma_{i}}{\gamma_{j}}$ for all $\sigma_{i}^{2}\geq\sigma_{j}^{2}$, $i,j\neq l$, system~\eqref{e2} asymptotically improves the wisdom; \label{C2-2-a}

\item system~\eqref{e2} asymptotically optimizes the wisdom asymptotically if and only if $\frac{\gamma_{i}}{\gamma_{l}}=C_{li}\frac{\sigma^{2}_{i}}{\sigma^{2}_{l}}$ for all $i\neq l$. 
\end{enumerate}
\end{enumerate}
\end{corollary}

\Cref{C2}~\ref{C2-1} shows that the \emph{PAP hypothesis} is significant in democratic influence networks. If more accurate individuals are more resistant to social influence in some extent, the wisdom is improved or optimized in democratic influence networks. \Cref{C2}~\ref{C2-2} shows that the FD opinion dynamics can improve or optimize wisdom even in autocratic influence networks. $\mathcal{E}^3$ and $\mathcal{E}^5$ in \cref{f3-d,f3-e} are the trajectories of the variances of collective estimates of system~\eqref{e2} under democratic network $\mathcal{G}(C^1)$ and autocratic network  $\mathcal{G}(C^3)$ with centralities scores $c=(1/3, 1/3, 1/3)^{\top}$ and $c=(0.5, 0.15, 0.35)^{\top}$, respectively. Correspondingly, individuals' susceptibilities are $\gamma=(0.2, 0.4, 0.6)^{\top}$ and $\gamma=(0.4, 0.2, 0.9)^{\top}$, which satisfy the conditions in \cref{C2}~\ref{C2-1-a} and \cref{C2}~\ref{C2-2-a}, respectively. Thus, the wisdom is optimized and improved respectively.

\subsection{Undermining the wisdom in irreducible influence networks}
We end this section by summarizing results on undermining the wisdom with the FD opinion dynamics in irreducible influence networks. 
\begin{corollary}[Undermining the wisdom in irreducible influence networks]\label{C3}
For system~\eqref{e2} with population $\mathcal{P}_{\sigma^{2},\gamma}$ and relative interaction network $\mathcal{G}(C)$, where $\gamma\in\Gamma=\setdef{z\in\mathbb{R}^{n}\setminus\{\mathbf{1}_{n}\}}{1\geq z_{i}>0, i\in\until{n}}$, $C$ is irreducible with dominant left eigenvector $c\in\interior\Delta_{n}$, 
\begin{enumerate}
\item if $\frac{c_{i}}{c_{j}}\geq\frac{\gamma_{i}}{\gamma_{j}}$ for all $\sigma_{i}^{2}\geq\sigma_{j}^{2}$ with at least one inequality holding strictly, system~\eqref{e2} asymptotically undermines the wisdom;\label{C3-1}

\item assume that individuals admit a $\tau$-hierarchy, then system~\eqref{e2} asymptotically undermines the wisdom if $\frac{c_{\tau_{i}}}{c_{\tau_{i+1}}}\leq\frac{\gamma_{\tau_{i}}}{\gamma_{\tau_{i+1}}}$ for all $i\in\until{n-1}$ with at least one strict inequality holding. \label{C3-2}

\end{enumerate}

\end{corollary}

Comparing \cref{C3} with \cref{C1}, if more accurate individuals have less network centralities and are more susceptible to social influence, the wisdom is undermined; on the contrary, if more accurate individuals have larger network centralities and are more resistant to social influence, the wisdom is improved. $\mathcal{E}^2$ and $\mathcal{E}^6$ in \cref{f3-d,f3-e} are the trajectories of the variances of collective estimates of system~\eqref{e2} under irreducible network $\mathcal{G}(C^2)$ and democratic network $\mathcal{G}(C^1)$, respectively. Individuals' susceptibilities are $\gamma=(0.4, 0.3, 0.1)^{\top}$ and $\gamma=(0.6, 0.5, 0.2)^{\top}$, which satisfy the conditions in \cref{C3}~\ref{C3-1} and \cref{C3}~\ref{C3-2}, respectively. Thus, system~\eqref{e2} undermines the wisdom under both settings.

\section{Conclusions}\label{S6}
This paper has investigated the problem of improving, optimizing and
undermining the collective wisdom in influence networks. Mathematical
formulation and rigorous analysis are provided. Various results for
improving, optimizing and undermining of the wisdom are derived from the
perspective of influence systems theory. Our theoretical results contribute
to the study of influence networks and the wisdom of crowds effect in the
following aspects: first, we show that wisdom in influence networks can be
both improved and undermined by social influence. Second, we provide
necessary and/or sufficient conditions for improving, optimizing and
undermining the wisdom. Finally, we provide theoretical explanations for
empirical evidence reported in the literature, such as the \emph{PAP
hypothesis}. 

Our investigation provides a mathematical perspective on the debate about how social influence affects the wisdom of crowds. Our approach is based upon influence system theory. As the first mathematical modelling step, this paper does not rely upon a particular dynamical model for individuals to interact and evolve their estimates. Instead, our analysis is based directly upon social power allocations in the influence system. On one hand, this modelling assumption ensures the universality of our results. On the other hand, this modelling assumption does not take into account the dynamic nature of social networks and any resulting dynamic adjustments of social power allocations. For example, individuals may dynamically modify the
influence weights they accord to others or adaptively adjust their
susceptibilities to social influence. At this time it is unclear what
information and what mechanisms are dominant in social power evolution;
this is a modelling problem with a corresponding analysis problem. We leave
these topics to further investigations.

\appendix
\section{Proof of \cref{L1}}\label{AP1}
Since $\sigma^{2}_{i}\leq\beta<\infty$ for all $i\in\until n$, we have
\begin{align*}
\lim_{n\to \infty}\Var[y_{\col}^{(n)}(0)]=\lim_{n\to \infty}\frac{1}{n^2}\sum_{i=1}^{n}\sigma_{i}^{2}
\leq\lim_{n\to \infty}\frac{\beta}{n}=0.
\end{align*}
Note that social influence asymptotically improves the wisdom, i.e., for all $n\geq 2$ we have $\Var[\lim_{k\to\infty}y_{\col}^{(n)}(k)]<\Var[y_{\col}^{(n)}(0)]$ for all $n\geq 2$. By Chebyshev's inequality \cite[Theorem 1.6.4]{RD:10}, we obtain
\begin{align*}
\Prob[\mid \lim_{k\to\infty}y_{\col}^{(n)}(k)-\mu\mid\geq\epsilon]\leq\frac{\Var[\lim_{k\to\infty}y_{\col}^{(n)}(k)]}{\epsilon^{2}}
<\frac{\Var[y_{\col}^{(n)}(0)]}{\epsilon^{2}}
\end{align*}
for any $\epsilon>0$. Since $\lim_{n\to \infty}\Var[y_{\col}^{(n)}(0)]=0$, we obtain $\lim_{n\to\infty}\Prob[\lim_{k\to\infty}y_{\col}^{(n)}(k)=\mu]=1$, which means the final collective estimate asymptotically converges to the truth {\it i.p.} as $n\to\infty$.
\section{Proof of \cref{L3}}\label{AP2}
Regarding~\ref{L3-0}, let $0<\lambda<1$. For any $z, z^{\prime}\in\Delta_{n}$ and $z\neq z^{\prime}$, we have
\begin{align*}
&\lambda \mathcal{E}_{\sigma^{2}}(z)+(1-\lambda)\mathcal{E}_{\sigma^{2}}(z^{\prime})-\mathcal{E}_{\sigma^{2}}(\lambda z+(1-\lambda)z^{\prime})\\
=&\lambda\!\sum_{i=1}^{n}z^{2}_{i}\sigma_{i}^{2}\!+\!(1\!-\!\lambda)\!\sum_{i=1}^{n}(z^{\prime}_{i})^{2}\sigma_{i}^{2}\!-\!\sum_{i=1}^{n}(\lambda z_{i}\!+\!(1\!-\!\lambda)z^{\prime}_{i})^{2}\sigma_{i}^{2}\\
=&\lambda(1-\lambda)(\sum_{i=1}^{n}z^{2}_{i}\sigma_{i}^{2}+\sum_{i=1}^{n}(z^{\prime}_{i})^{2}\sigma_{i}^{2}-2\sum_{i=1}^{n}z_{i}z^{\prime}_{i}\sigma_{i}^{2})\\
=&\lambda(1-\lambda)\sum_{i=1}^{n}(z_{i}-z^{\prime}_{i})^{2}\sigma_{i}^{2}>0.
\end{align*} 
Thus, $\mathcal{E}_{\sigma^{2}}(z)$ is strictly convex on $\Delta_{n}$. Moreover, for any $z,z^{\prime}\in\mathcal{A}_{\sigma^{2}}$, there holds
\begin{align*}
\mathcal{E}_{\sigma^{2}}(\lambda z+(1-\lambda)z^{\prime})<\lambda \mathcal{E}_{\sigma^{2}}(z)+(1-\lambda)\mathcal{E}_{\sigma^{2}}(z^{\prime})
<\mathcal{E}_{\sigma^{2}}(\frac{\mathbf{1}_{n}}{n}),
\end{align*} 
where the first inequality is implied by strict convexity of $\mathcal{E}_{\sigma^{2}}(z)$ on $\Delta_{n}$ and the second inequality follows from $z,z^{\prime}\in\mathcal{A}_{\sigma^{2}}$. Therefore, $\mathcal{A}_{\sigma^{2}}$ is convex.

Regarding~\ref{L3-1}, equivalently, we prove $\mathcal{A}_{\sigma^{2}}=\emptyset$ if and only if $\sigma^{2}\in\Span{\mathbf{1}_{n}}$. {\it Necessity}. If $\mathcal{A}_{\sigma^{2}}=\emptyset$, then $\mathcal{E}_{\sigma^{2}}(z)\geq\mathcal{E}_{\sigma^{2}}(\frac{\mathbf{1}_{n}}{n})$ for all $z\in\Delta_{n}$. Suppose that there exist $i,j\in\until{n}$ such that $\sigma_{i}^{2}>\sigma_{j}^{2}$. For $\alpha\in(0,\frac{1}{n})$, note that $\frac{\mathbf{1}_{n}}{n}-\alpha\mathbf{e}_{i}+\alpha\mathbf{e}_{j}\in\Delta_{n}$, where $\mathbf{e}_{i}\in\mathbb{R}^{n}$ is the $i$-th canonical basis. Moreover,  
\begin{align*}
\mathcal{E}_{\sigma^{2}}(\frac{\mathbf{1}_{n}}{n}-\alpha\mathbf{e}_{i}+\alpha\mathbf{e}_{j})-\mathcal{E}_{\sigma^{2}}(\frac{\mathbf{1}_{n}}{n})
=&(\frac{1}{n}-\alpha)^{2}\sigma_{i}^{2}+(\frac{1}{n}+\alpha)^{2}\sigma_{j}^{2}-\frac{1}{n^{2}}\sigma_{i}^{2}-\frac{1}{n^{2}}\sigma_{j}^{2}\\
=&\alpha(\alpha(\sigma_{i}^{2}+\sigma_{j}^{2})-\frac{2}{n}(\sigma_{i}^{2}-\sigma_{j}^{2})),
\end{align*}
in which $\alpha(\sigma_{i}^{2}+\sigma_{j}^{2})-\frac{2}{n}(\sigma_{i}^{2}-\sigma_{j}^{2})<0$ for $\alpha<\min\{\frac{1}{n}, \frac{2(\sigma_{i}^{2}-\sigma_{j}^{2})}{n(\sigma_{i}^{2}+\sigma_{j}^{2})}\}$, and the existence of such $\alpha$ is guaranteed by $\sigma_{i}^{2}>\sigma_{j}^{2}$. That is to say, there exists $\frac{\mathbf{1}_{n}}{n}-\alpha\mathbf{e}_{i}+\alpha\mathbf{e}_{j}\in\Delta_{n}$ such that $\mathcal{E}_{\sigma^{2}}(\frac{\mathbf{1}_{n}}{n}-\alpha\mathbf{e}_{i}+\alpha\mathbf{e}_{j})<\mathcal{E}_{\sigma^{2}}(\frac{\mathbf{1}_{n}}{n})$, which is contradicted with $\mathcal{A}_{\sigma^{2}}=\emptyset$. Therefore, $\sigma^{2}\in\Span{\mathbf{1}_{n}}$ if $\mathcal{A}_{\sigma^{2}}=\emptyset$.

{\it Sufficiency}. Let $\sigma^{2}=\delta\mathbf{1}_{n}$ with $\delta\in (0,\infty)$. Then, for any $z\in\Delta_{n}$, there holds $\mathcal{E}_{\sigma^{2}}(z)=\delta\sum_{i=1}^{n}z_{i}^{2}$. Since $\argmin_{z\in\Delta_{n}}\sum_{i=1}^{n}z_{i}^{2}=\frac{\mathbf{1}_{n}}{n}$, we have $\min_{z\in\Delta_{n}}\mathcal{E}_{\sigma^{2}}(z)=\mathcal{E}_{\sigma^{2}}(\frac{\mathbf{1}_{n}}{n})$, which means $\mathcal{A}_{\sigma^{2}}=\emptyset$.

Regarding~\ref{L3-2}, since $\mathcal{\tilde{A}}_{\sigma^{2}}\subset\mathcal{A}_{\sigma^{2}}$ if $\mathcal{\tilde{A}}_{\sigma^{2}}=\emptyset$, we just need to consider the case that $\mathcal{\tilde{A}}_{\sigma^{2}}\neq\emptyset$. First, if $\sigma^{2}\in\Span{\mathbf{1}_{n}}$, then $\mathcal{\tilde{A}}_{\sigma^{2}}=\emptyset$ since $\mathcal{\tilde{A}}_{\sigma^{2}}\subset\interior{ \Delta_{n}}\setminus \{\frac{\mathbf{1}_{n}}{n}\}$; if $\sigma^{2}\notin\Span{\mathbf{1}_{n}}$ and there exists $i\in\until{n-1}$ such that $\sigma_{i}^{2}>\sigma_{i+1}^{2}$, then $\mathcal{\tilde{A}}_{\sigma^{2}}=\emptyset$. Thus, $\mathcal{\tilde{A}}_{\sigma^{2}}\neq\emptyset$ only if $\sigma^{2}\notin\Span{\mathbf{1}_{n}}$ and $\sigma_{i}^{2}\leq\sigma_{i+1}^{2}$ for all $i\in\until{n-1}$. Note that $\mathcal{E}_{\sigma^{2}}(z)$ is differentiable on $\Delta_{n}$, let $\nabla \mathcal{E}_{\sigma^{2}}(z)\in\mathbb{R}^{n}$ be its gradient. For all $z\in\mathcal{\tilde{A}}_{\sigma^{2}}$, the mean value theorem~\cite[Theorem 3.2.2]{JMO-WCR:70} ensures the existence of $z^{\prime}=\lambda z+(1-\lambda)\frac{\mathbf{1}_{n}}{n}$ with $\lambda\in(0,1)$ such that 
\begin{align*}
\mathcal{E}_{\sigma^{2}}(z)-\mathcal{E}_{\sigma^{2}}(\frac{\mathbf{1}_{n}}{n})=\nabla^{\top}\mathcal{E}_{\sigma^{2}}(z^{\prime})(z-\frac{\mathbf{1}_{n}}{n}).
\end{align*}
Because $z\neq \frac{\mathbf{1}_{n}}{n}$ and $z_{i}\geq z_{i+1}$ for all $i\in\until{n-1}$, there exists $l\in\until{n-1}$ such that $z_{i}>\frac{1}{n}$ for $i\leq l$ and $z_{i}\leq\frac{1}{n}$ for $i>l$. Note that $\frac{\partial\mathcal{E}_{\sigma^{2}}(z)}{\partial z_{i}}=2z_{i}\sigma_{i}^{2} $, thus 
\begin{align*}
\mathcal{E}_{\sigma^{2}}(z)-\mathcal{E}_{\sigma^{2}}(\frac{\mathbf{1}_{n}}{n})
=2\sum_{i=1}^{l}z^{\prime}_{i}\sigma_{i}^{2}(z_{i}-\frac{1}{n})-2\sum_{i=l+1}^{n}z^{\prime}_{i}\sigma_{i}^{2}(\frac{1}{n}-z_{i}),
\end{align*}
where $\sum_{i=1}^{l}(z_{i}-\frac{1}{n})=\sum_{i=l+1}^{n}(\frac{1}{n}-z_{i})>0$ because $z\in\Delta_{n}$. Since $\sigma_{i}^{2}\leq\sigma_{i+1}^{2}$ and $z_{i}\sigma_{i}^{2}\leq z_{i+1}\sigma_{i+1}^{2}$ for all $i\in\until{n-1}$, we obtain 
\begin{align*}
z^{\prime}_{i}\sigma_{i}^{2}=\lambda z_{i}\sigma_{i}^{2}+(1-\lambda)\frac{1}{n}\sigma_{i}^{2}
\leq \lambda z_{i+1}\sigma_{i+1}^{2}+(1-\lambda)\frac{1}{n}\sigma_{i+1}^{2}=z^{\prime}_{i+1}\sigma_{i+1}^{2}
\end{align*}
for all $i\in\until{n-1}$. Particularly, $\sigma_{l}^{2}<\sigma_{l+1}^{2}$ is implied by $z_{l}>z_{l+1}$ and $z_{l}\sigma_{l}^{2}\leq z_{l+1}\sigma_{l+1}^{2}$, and implies $z^{\prime}_{l}\sigma_{l}^{2}<z^{\prime}_{l+1}\sigma_{l+1}^{2}$. Therefore, we obtain 
\begin{align*}
\mathcal{E}_{\sigma^{2}}(z)-\mathcal{E}_{\sigma^{2}}(\frac{\mathbf{1}_{n}}{n})
\leq 2(z^{\prime}_{l}\sigma_{l}^{2}-z^{\prime}_{l+1}\sigma_{l+1}^{2})\sum_{i=1}^{t}(z_{i}-\frac{1}{n})<0,
\end{align*}
which means $z\in\mathcal{A}_{\sigma^{2}}$. Thus, $\mathcal{\tilde{A}}_{\sigma^{2}}\subset\mathcal{A}_{\sigma^{2}}$.

\section{Proof of \cref{L4}}\label{AP3}
Regarding~\ref{L4-1}. {\it Necessity}. Suppose there exists $j\in\until{n-1}$ such that $\frac{1}{j^2}\sum_{r=1}^{j}\sigma_{r}^{2}\geq\frac{1}{n^2}\sum_{i=1}^{n}\sigma_{i}^{2}$. Let $\hat{z}=\frac{1}{j}\sum_{i=1}^{j}\mathbf{e}_{i}$, where $\mathbf{e}_{i}\in\mathbb{R}^{n}$ is the $i$-th canonical basis. Then, $\hat{z}_{i}=\frac{1}{j}$ for all $i\leq j$ and $\hat{z}_{i}=0$ for all $i>j$, that is to say, $\hat{z}\in\Delta^{\tau^{0}}_{n}$. On the other hand,
\begin{align*}
\mathcal{E}_{\sigma^{2}}(\hat{z})=\frac{1}{j^2}\sum_{r=1}^{j}\sigma_{r}^{2}\geq\frac{1}{n^2}\sum_{i=1}^{n}\sigma_{i}^{2}.
\end{align*}
Hence, $\hat{z}\not\in \mathcal{A}_{\sigma^{2}}$, thus $\Delta^{\tau^{0}}_{n}\not\subset \mathcal{A}_{\sigma^{2}}$, which is a contradiction. Therefore, $\Delta^{\tau^{0}}_{n}\subset \mathcal{A}_{\sigma^{2}}$ only if~\eqref{eq2} holds for all $j\in\until{n-1}$.  

{\it Sufficiency}. Denote $\tilde{\mathbf{e}}_{i}=\frac{1}{i}\sum_{r=1}^{i}\mathbf{e}_{r}\in\Delta_{n}$. First, we show that for all $z\in\Delta^{\tau^{0}}_{n}$, there exists $q\in\Delta_{n}$ such that $z=\sum_{i=1}^{n}q_{i}\tilde{\mathbf{e}}_{i}$. Let $\delta_{i}=z_{i}-z_{i+1}$, $i\in\until{n-1}$ and $\delta_{n}=z_{n}$. Note that $\delta_{i}\geq 0$ and there exists at least one $\delta_{i}>0$ since $z\neq\frac{\mathbf{1}_{n}}{n}$. Let $q_{i}=i\delta_{i}$ for all $i\in\until{n}$. Then, we have $q\geq 0$, 
\begin{align*}
\sum_{i=1}^{n}q_{i}=\sum_{i=1}^{n-1}i(z_{i}-z_{i+1})+nz_{n}=1,
\end{align*}
and
\begin{align*}
\sum_{i=1}^{n}q_{i}\tilde{\mathbf{e}}_{i}=\sum_{i=1}^{n}\delta_{i}\sum_{r=1}^{i}\mathbf{e}_{r}=\sum_{i=1}^{n}\mathbf{e}_{i}\sum_{r=i}^{n}\delta_{r}=z,
\end{align*}
where the last equality follows from $\sum_{r=i}^{n}\delta_{r}=z_{i}$. Recall that $\mathcal{E}_{\sigma^{2}}(z)$ is strictly convex on $\Delta_{n}$, and $\tilde{\mathbf{e}}_{i}\in\Delta_{n}$ for all $i\in\until{n}$, we obtain 
\begin{align*}
\mathcal{E}_{\sigma^{2}}(z)=\mathcal{E}_{\sigma^{2}}(\sum_{i=1}^{n}q_{i}\tilde{\mathbf{e}}_{i})<\sum_{i=1}^{n}q_{i}\mathcal{E}_{\sigma^{2}}(\tilde{\mathbf{e}}_{i})<\frac{1}{n^2}\sum_{i=1}^{n}\sigma_{i}^{2},
\end{align*}
where the last inequality follows from $\sum_{i=1}^{n}q_{i}=1$ and $\mathcal{E}_{\sigma^{2}}(\tilde{\mathbf{e}}_{j})=\frac{1}{j^2}\sum_{r=1}^{j}\sigma_{r}^{2}<\frac{1}{n^2}\sum_{i=1}^{n}\sigma_{i}^{2}$ for all $j\in\until{n-1}$. 

The proof of statement~\ref{L4-2} is similar to the proof of statement~\ref{L4-1}. We omit the proof by noticing the facts that $\mathcal{\hat{A}}_{m}\subset \Delta^{\tau^{0}}_{n}$ and $z=\sum_{i=1}^{m}q_{i}\tilde{\mathbf{e}}_{i}$ for all $z\in\mathcal{\hat{A}}_{m}$ with $q_{i}$ and $\tilde{\mathbf{e}}_{i}$ defined above.

\bibliographystyle{siamplain}
\bibliography{alias,Main,FB}

\end{document}